\documentclass[11pt]{article}
\usepackage{graphicx} %
\usepackage[utf8]{inputenc}
\usepackage[margin=1in]{geometry}
\usepackage{amsmath,amsthm,amssymb}
\usepackage{xcolor}
\usepackage{bm}
\usepackage[hyphens]{url}
\usepackage{hyperref}
\usepackage[nameinlink,capitalise]{cleveref}
\usepackage{bbold}
\usepackage{mathtools}
\usepackage{doi}
\usepackage{tikz}
\usepackage{thm-restate}
\usepackage[ruled]{algorithm2e}
\usepackage[T1]{fontenc}
\usepackage{indentfirst}
\usepackage{libertine}
\usepackage{natbib}
\usepackage{xspace}

\usepackage{etoolbox}
\usepackage{todonotes}
\newtoggle{DEBUG}
\togglefalse{DEBUG}
\newtoggle{COMMENTS}
\toggletrue{COMMENTS}

\usepackage[showdeletions]{color-edits} 
\addauthor[Sid]{sb}{blue}
\addauthor[Kamesh]{km}{magenta}
\addauthor[Kangning]{kn}{orange}

\usetikzlibrary{arrows.meta, positioning}

\urlstyle{sf}

\newcommand{\E}{\mathbb{E}}
\newcommand{\R}{\mathbb{R}}

\newcommand{\bu}{\bm{u}}
\newcommand{\bv}{\bm{v}}
\newcommand{\bem}{\bm{m}}
\newcommand{\bxi}{\bm{\xi}}

\newcommand{\hp}{\hat{p}}
\newcommand{\SW}{\mathrm{SW}}
\newcommand{\NE}{\mathrm{NE}}

\newcommand{\cut}{\mathrm{cut}}
\newcommand{\ber}{\mathtt{Bernoulli}}

\newcommand{\nz}{\mathcal{Z}}
\newcommand{\bnz}{\bm{\mathcal{Z}}}
\newcommand{\bpis}{\bm{\pi}_{*}}
\newcommand{\nI}{\mathcal{I}}
\newcommand{\nf}{\mathcal{F}}
\newcommand{\nt}{\mathcal{T}}
\newcommand{\nn}{\mathcal{N}}
\newcommand{\bnf}{\bm{\mathcal{F}}}

\newcommand{\PoA}{\mathrm{PoA}}
\newcommand{\nw}{\mathcal{W}}

\newcommand{\bfz}{\bm{\mathsf{Z}}}
\newcommand{\bsi}{\bm{\sigma}}

\newcommand{\bmpi}{\bm{\pi}}
\newcommand{\PSi}{\bigotimes_{i = 1}^N \Sigma_i}

\newcommand{\principal}{principal\xspace}

\newtheorem{theorem}{Theorem}[section]
\newtheorem{lemma}[theorem]{Lemma}

\theoremstyle{definition}
\newtheorem{definition}[theorem]{Definition}
\newtheorem{example}[theorem]{Example}

\renewcommand{\d}{\mathop{\mathrm{d}\!}}

\crefname{Program}{Program}{Programs}
\creflabelformat{Program}{(#2\textup{#1})#3}

\renewcommand{\epsilon}{\varepsilon}

\makeatletter
\def\@fnsymbol#1{\ensuremath{\ifcase#1\or \dagger\or \ddagger\or
   \mathsection\or \mathparagraph\or \|\or **\or \dagger\dagger
   \or \ddagger\ddagger \else\@ctrerr\fi}}
\makeatother
\newcommand*\samethanks[1][\value{footnote}]{\footnotemark[#1]}

\definecolor{linkc}{rgb}{0.1, 0.5, 0.7}
\definecolor{citec}{rgb}{0.6, 0.3, 0.7}
\definecolor{urlc}{rgb}{0.5, 0.1, 0.2}
\hypersetup{
    colorlinks=true,
    linkcolor=linkc,
    citecolor=citec,
    urlcolor=urlc
}

\title{The Price of Competitive Information Disclosure}
\author{Siddhartha Banerjee\thanks{School of Operations Research and Information Engineering, Cornell University, Ithaca, NY 14853. Email: \textsf{sbanerjee@cornell.edu}. Supported by NSF grants ECCS-1847393, CNS-195599 and AFOSR grant FA9550-23-1-0068.} \and Kamesh Munagala\thanks{Department of Computer Science, Duke University, Durham, NC 27708-0129. Emails: \textsf{kamesh@cs.duke.edu}, \textsf{yiheng.shen@duke.edu}. Supported by NSF grant IIS-2402823.} \and Yiheng Shen\samethanks[2] \and Kangning Wang\thanks{Department of Computer Science, Rutgers University, Piscataway, NJ 08854. Email: \textsf{kn.w@rutgers.edu}.}}
\date{}

\begin{document}

\maketitle
\begin{abstract}
In many decision-making scenarios, individuals strategically choose what information to disclose to optimize their own outcomes. It is unclear whether such strategic information disclosure can lead to good societal outcomes. To address this question, we consider a competitive Bayesian persuasion model in which multiple agents selectively disclose information about their qualities to a principal, who aims to choose the candidates with the highest qualities. Using the price-of-anarchy framework, we quantify the inefficiency of such strategic disclosure. We show that the price of anarchy is at most a constant when the agents have independent quality distributions, even if their utility functions are heterogeneous. This result provides the first theoretical guarantee on the limits of inefficiency in Bayesian persuasion with competitive information disclosure.
\end{abstract}

\thispagestyle{empty}

\newpage
\tableofcontents

\thispagestyle{empty}

\newpage
\setcounter{page}{1}

\section{Introduction}

Every day, people make critical decisions based on limited information, but often \emph{someone else} controls what information they see. For example, when job seekers apply to a company, they highlight their best qualities (such as a prestigious degree or a successful project) and downplay less flattering information. The employer then must decide who to hire based on the r\'{e}sum\'{e}s and references provided. Similarly, in college admissions, students choose which recommendation letters to send or whether to submit optional test scores, and hence curate the pictures that admissions officers see. Finally, as a very recent application~\citep*{collina2025emergentalignmentcompetition}, consider an entity that is querying strategic AI agents and using their responses to optimize its action. In all examples, the party presenting the information (for instance, the AI agent or applicant) has an incentive to reveal the information selectively to improve their chances of being selected. Such strategic revelation of information is not lying; it is about being selective in revealing the truth. 

From the perspective of someone who aims to optimize societal outcomes, agents revealing information should only be helpful, since the societal optimizer always has the option of ignoring the information revealed. However, quantitatively, to what extent can information revelation improve societal outcomes? We first look at a simple example.
\begin{example}
\label{eg:intro}
Consider a scenario where there are $N$ agents and each of them has a private quality drawn independently from a $\ber(1/N)$ distribution.\footnote{This means each agent's quality is $1$ with probability $1/N$, and is $0$ with probability $1 - 1/N$.} The quality of an agent represents the value that she can bring to the society. The principal can select up to one agent. If the principal has no additional information, his best strategy is to arbitrarily select an agent to obtain an expected quality of $1/N$. On the other hand, in the ``first-best'' scenario where the principal has access to full information, he can obtain an expected quality of $1-(1-1/N)^N \approx 1-1/e$. The ratio between these values is $\Theta(N)$, showing that accessing agents' value information helps improve the societal outcome significantly.
\end{example}

\paragraph{Voluntary information disclosure and the price of concealment.} If the agents are forced to reveal their qualities, then the principal always gets the first best. However, it may not be practical to assume full information revelation, in light of philosophical and legal considerations and privacy concerns. At the other extreme, if agents can arbitrarily self-report their qualities as in a classical model known as ``cheap talk'', then nothing prevents an agent from misreporting the highest possible quality. The model of \emph{Bayesian persuasion}~\citep*{kamenica2011bayesian} provides a suitable middle ground, where an agent can meaningfully persuade the principal by ex ante \emph{committing} to a signaling scheme, which maps the quality to signals before realizing the quality of the agent. The next example illustrates possible interactions of agents who engage in Bayesian persuasion. 
\begin{example}[continued from \cref{eg:intro}]
In the previous setting, one option for all agents is to commit to revealing their true qualities. However, these commitments do not form an equilibrium: If agents $\{2,3,\ldots,N\}$ follow this policy, then agent $1$ is better off not revealing anything, since she can then win with a probability of $(1-1/N)^{N-1} \approx 1/e \gg 1/N$. In a different scenario where agents $\{2,3,\ldots,N\}$ decide not to disclose anything, agent $1$ can do much better by sending a signal $\sigma$ with a small probability of $\frac{\varepsilon}{1 - 1 / N}$ conditioned on her value being $0$, and otherwise send signal $\sigma'$. In this case, the mean of  $\sigma'$ is $ \frac{\Pr[\textrm{her value is $1$}]}{1 - \Pr[\sigma \textrm{ is sent}]} = \frac{1/N} {1 -  \frac{\varepsilon}{1 - 1 / N} \cdot (1 - 1 / N)} = \frac{1}{(1 - \varepsilon)N}$. Since $\frac{1}{(1 - \varepsilon)N} > 1/N$, she will be selected as long as $\sigma'$ is sent, with a probability of $1 - \varepsilon \gg 1/N$.
\end{example}

The idea of \emph{strategic information revelation} formalizes voluntary information disclosure in competitive settings and provides a framework for reasoning about the aforementioned real-world scenarios. In this work, we consider a model of competitive Bayesian persuasion first studied by \citet*{boleslavsky2015grading}, and subsequently by \citet*{AuKawai} and by \citet*{Du2024} among others. In our model, there are multiple agents with private information about their \emph{qualities}. Those qualities are drawn independently from commonly known priors, which are not necessarily identical. 
A single principal (or \emph{receiver}) can select up to $k$ agents and aims to maximize the expected sum of their qualities. Each agent also derives some utility from being selected based on a given monotone function of her quality. Each agent can disclose information about her quality to the principal via some \emph{signaling scheme}, which is a probabilistic mapping from qualities to signals. Upon receiving the signals from each agent, the principal updates his belief on each agent's quality, and finally selects the agents with the highest posterior means of quality. 

In our model, we assume agents receive utility on selection that is an arbitrary monotone function of their quality. Such monotone utility functions are naturally motivated. As an example, the agents may only care about the event that they are selected, and derive a fixed, short-term utility from selection. This would correspond to the agents' utility function being a constant independent of their quality. On the other hand, the long-term reward an agent may receive (for instance, pay or grades) subsequent to being selected may depend on their quality. If the agent's utility depends on such long-term rewards, it can be modeled as a monotone function of quality. Indeed, different agents can prioritize short-term versus long-term rewards differently, leading to heterogeneity in their utility functions.

This signaling model formalizes our main goal of characterizing the extent to which competitive information disclosure can improve societal outcomes. As in our example, a natural benchmark for the principal is the ``first best''---the sum of qualities of selected agents when the principal has full information of agents' qualities. As a crucial constraint in Bayesian persuasion models, each agent's posterior distribution conditioned on their signal must be \emph{Bayes plausible}, i.e., the signaling scheme must be a valid probabilistic mapping from their qualities to signals. Since agents compete with each other, their overall strategies must form a \emph{Nash equilibrium}, wherein any agent's signaling scheme is a best response to all other agents' strategies based on her own utility function for being selected. Finally, adopting the lens of the price of anarchy (PoA), we define the notion of \emph{price of concealment} for our setting: \emph{How much worse can an equilibrium outcome be, compared to the first-best outcome?} Here, we define the price of concealment as the ratio in the principal's objective (expected sum of qualities of the selected agents) between the full-information outcome and the strategic outcome.

\begin{example}[continued from~\cref{eg:intro}]
For $N$ agents with qualities drawn from i.i.d.\@ Bernoulli distributions $\ber(1/N)$, suppose that each agent gets a constant utility of $1$ from being selected. Then, following the results of \citet*{AuKawai} (see also \cref{thm:lb}), there is a \emph{unique} symmetric Nash equilibrium. In this equilibrium, every agent implements a signaling scheme where the posterior mean of the signals forms a distribution with the cumulative distribution function $G(s)=s^{\frac{1}{N-1}}$. Moreover, the price of concealment is about $2 - 2/e \approx 1.26$ when $N$ is large.  
\end{example}
As our running example illustrates, the equilibrium strategies in simple settings can still be somewhat involved. However, in this example, the price of concealment is much smaller than $\Theta(N)$. That is to say, voluntary information disclosure can lead to a dramatic societal improvement here---but how far does this improvement extend?

\subsection{Summary of Results}

The main contribution of our work (\cref{thm:main}) is to show that the price of concealment of strategic information revelation is a constant (at most $22.19$), even when the agents have different utilities for being selected, the prior distributions over the agents' qualities are heterogeneous, and the principal can select multiple agents. We also show a lower bound of $2$ on the price of concealment (\cref{thm:lb}) even in the simple case where the agents' priors are i.i.d.\@ and Bernoulli. 

\paragraph{Remarks.} Before proceeding further, three remarks are in order. First, we define the price of concealment using the principal's welfare. This definition is somewhat different from the well-known price of anarchy (PoA) definition, as we do not consider the agents' utility (which can be an arbitrary monotone function of their qualities) in the calculation.\footnote{Some of our positive results, in particular, \cref{thm:warmup}, easily extend directly to show the same bound on PoA when we consider the utility of all agents and the principal.} This is because in the settings we consider, the principal's utility---the qualities of the selected agents---matches better with social welfare. In hiring scenarios for example, the agents' utilities might be derived from receiving monetary payments from the principal, and hence cancel out with the principal's payments when calculating social welfare. In other scenarios, agents' utilities need not be on the same scale as each other's or the principal's, and might be negligible when compared to the principal's utility. From now on, we will use the term PoA to refer to our notion (where we consider the gap in the principal's utility), in keeping with the existing literature.

Second, we highlight two fundamental modeling choices that our main result relies on. First, we assume that the agents' qualities are non-negative, which is natural in settings like hiring or admissions: candidates can be stronger or weaker, but are generally not ``harmful'' to the principal. As we show in \cref{sec:lb}, the PoA can become unbounded if qualities are allowed to be negative. The non-negativity also implies that the principal always selects as many agents as allowed. Second, we assume the agents' utility functions are monotone in their qualities. This assumption ensures that agents' incentives are not in conflict with the principal's objective, and deviating toward a signal with more probability mass on higher values will generally increase the agent's utility. Such property is critical in our utility gain analysis for the deviation signal in \cref{lem:deviation_wp_lb}.

Finally, the work of \citet*{AuKawai} and \citet*{Du2024} establishes the existence of an equilibrium in the canonical setting where agents derive constant utility from being selected, and have identical and non-identical priors respectively.\footnote{This setting extends the special case presented in \cref{sec:warmup} to non-identical priors.}  Though the existence of a Nash Equilibrium is an open question for the general scenario with heterogeneous utility functions that we consider\footnote{The equilibrium need not exist for priors with point masses~\citep*{Du2024}, but it is open whether it exists when the priors are continuous.}, our constant PoA bound directly applies to the constant utility setting. To circumvent the possible lack of an equilibrium in the general game, we present two natural variants of the game in the appendix, where a Nash equilibrium is guaranteed to exist and the constant PoA bound for the original game applies with minor changes in analysis. 

\begin{itemize}
    \item In \cref{app:discretized}, we present a discretized variant of the game where the number of signaling schemes, viewed as pure strategies, is finite. Here, under general utilities, we establish existence of Nash equilibria, and show that our PoA results gracefully degrade in the discretization parameter.

    \item In \cref{app:noisy}, we introduce a variant where the principal observes the signals with a small random noise, so that their selection is noisy. Under smoothness assumptions on the noise, we show a Nash equilibrium exists, and for sufficiently small noise, our PoA bounds carry over.
\end{itemize}  
Since games are typically implemented in practice via finite menus with discretized reports, and the receiver usually cannot observe signals perfectly, these results indicate that our PoA analysis can be extended to these more realistic, yet general settings, where an equilibrium is guaranteed to exist.

\paragraph{Techniques.} In contrast with problems for which well-trodden approaches directly show PoA bounds (for instance, the smoothness framework~\citep*{smoothness} or analysis via potential functions~\citep*{PoS}), there are two aspects in which our problem is novel. First, the strategy space of any agent is the set of signaling schemes, which is infinite in size even for discrete quality distributions. Second and more importantly, the utilities are asymmetric between the agents and the principal, so that the agents are responding to the principal's optimization objective using their own utility functions. Furthermore, the social welfare is not the sum of all agents' utilities, but rather, only the principal's utility. At an abstract level, this makes our problem similar to PoA for revenue in auctions, as the revenue (rather than the welfare) captures the seller's utility and is unrelated to buyers' utility. Although in any Bayes-Nash equilibrium of an auction, the revenue can be viewed as ``virtual welfare'' via Myerson's lemma, which still enables the application of the smoothness framework~\citep*{HartlineHT14}, such a characterization is not available in our model.

Similarly, in Bayesian persuasion, the optimization behavior of the principal is fixed and cannot be controlled by the agents. This makes the welfare maximization problem, even in the absence of strategic behavior, non-convex, and in fact, we do not know constant-factor approximation algorithms for this setting even when the principal chooses one agent~\citep*{BanerjeeM0025}. This fact makes it difficult to give a characterization in terms of potential games, and these complexities in the strategic Bayesian persuasion problem necessitate the development of a novel analytic framework.

Our framework is inspired by techniques from stochastic optimization problems, particularly prophet inequalities~\citep*{KrengelS} and the core--tail decomposition technique in auction design \citep*{BabaioffILW}. We develop related techniques to decompose an agent's distribution and explicitly construct a signaling scheme that they can use to deviate and improve their utility if the PoA of the equilibrium is too high. This construction is intricate and represents our main technical contribution. As a warm-up to our main result and to present the main ideas of core--tail decomposition and signal construction cleanly, in \cref{sec:warmup}, we consider the special case where all agents have identical priors and derive constant utility from being selected. We show a PoA of $4$ for this case and highlight the challenges in generalizing this result.

\subsection{Related Work}
\paragraph{Strategic information revelation and Bayesian persuasion.} Our work is closely related to the field of information design, which studies how information is selectively disclosed to influence decision-making (see surveys by~\citet*{bergemann2019information,dughmi2017algorithmic}). A foundational model in this area is Bayesian persuasion \citep*{kamenica2011bayesian}, where a well-informed sender (in our case, the agent) strategically designs signals to influence the actions of a receiver (in our case, the principal) who seeks to maximize their own utility. This framework has been applied in diverse domains, including price discrimination \citep*{AlijaniBMW22,bergemann2015limits,Banerjee2024fair,cummings2020algorithmic,RoeslerS}, security games \citep*{DBLP:conf/aaai/XuRDT15}, AI alignment~\citep*{collina2025emergentalignmentcompetition}, and regret minimization \citep*{DBLP:conf/sigecom/BabichenkoTXZ21}. As mentioned before, the problem has significant non-convexity since the sender does not control the receiver's optimization routine. Nevertheless,  \citet*{dughmi2016algorithmic} show computationally efficient algorithms for finding signaling schemes that optimize the sender's utility. These algorithms do not extend to our setting where there are multiple agents with independent priors. As such, our setting is already challenging even without the strategic aspect, and \citet*{BanerjeeM0025} present logarithmic approximation algorithms for this setting when the receiver chooses one agent.

\paragraph{Competitive Bayesian persuasion games.}
Unlike classical Bayesian persuasion, where a central mediator controls information flow, our setting features competitive information revelation, where multiple self-interested agents selectively disclose signals about themselves to a decision-maker (receiver or principal). As noted before, an agent's strategic behavior is in the selection of the signaling scheme, which we assume is Bayes plausible. Special cases of this Bayesian persuasion game have been widely studied in various settings. \citet*{boleslavsky2015grading} analyze how schools strategically set grading standards to influence an evaluator's perceptions, while \citet*{boleslavsky2018limited} consider project selection under limited capacity, where agents compete by producing persuasive evidence. Similarly,~\citet*{jain2019competing} examine competition among senders targeting a rationally inattentive receiver, while \citet*{au2021competitive} extend these ideas by incorporating correlated information among senders. \citet*{koessler2022long} study a dynamic variant of information disclosure where two competing senders repeatedly release public signals to a single receiver over an unbounded time horizon. They show that disclosure equilibrium is one-shot if there is no constraint on the senders' disclosure policies. \citet*{banerjee2021threshold} study the price of anarchy in a signaling game, where the agents are restricted to use only threshold tests as signaling strategies. All these models consider only the competitive behavior between two agents.

For multiple agents, \citet*{hwang2019competitive} study equilibria with competitive information revelation in advertising and pricing, while \citet*{hoffmann2020persuasion} investigate the implications of selective disclosure for marketing and privacy regulation. \citet*{gradwohl2022reaping} consider the scenario when the receiver can only obtain signals from a single agent. More recently, \citet*{ding2023competitive} propose a model for competitive information design in a Pandora’s Box framework, and \citet*{sapiro2024persuasion} analyze persuasion when receiver preferences are ambiguous. \citet*{tang2024intrinsic} consider a Bayesian persuasion version of prophet inequalities, where the receiver sets thresholds based on known priors of the agents -- this converts the game to a two-agent setting, as now each sender is competing only against the receiver, and not other senders. Relatedly, there are other sequential persuasion models inspired by classical stopping problems. \citet*{hahn2020prophet} study Bayesian persuasion through the lens of prophet inequalities, and \citet*{hahn2022secretary} study the secretary recommendation problem in a similar framework. These works focus on sequential arrivals and stopping rules, whereas our work utilizes the prophet-inequality style analysis on Price of Anarchy in a multi-agent competitive environment. Finally, a different line of work studies the persuasion game when agents share a common state, and reveal information about this state to the receiver~\citep*{gentzkow2016competition,gentzkow2017bayesian,ravindran2020competing,hossain2024multi}. 

In contrast with these models, we study the Bayesian persuasion game with multiple senders and a value-maximizer receiver. The existence of an equilibrium in this model has been shown with symmetric~\citep*{AuKawai} and asymmetric agents~\citep*{Du2024}. In their models, every agent gains a constant utility when selected; on the other hand, we study a more general setting where the utilities of each agent can be heterogeneous.  Further, our work focuses on the welfare loss arising from strategic signaling via the price of anarchy. As far as we are aware, we are the first to consider the notion of price of anarchy for general Bayesian persuasion games.

\paragraph{Fairness and strategic selection.} Strategic information revelation is particularly important in selection settings, such as hiring and college admissions, where decision-makers evaluate candidates based on self-reported attributes. Traditional selection policies prioritize meritocratic ranking based on observable scores, but these scores can be biased due to demographic correlations~\citep*{dixon2013race}.  A complementary line of research models uncertainty in evaluation measures and proposes randomized selection mechanisms to ensure fairness~\citep*{emelianov2020fair,mehrotra2021mitigating,garciasoriano2021maxmin,singh2021fairness,shen2023fairness,devic2024stability,BanerjeeM0025}. Our work extends this perspective by incorporating strategic information disclosure into the selection process. Rather than assuming a fixed information structure, we study how self-interested agents selectively reveal information to persuade decision-makers, and analyze the resulting impact on social efficiency.

\section{Preliminaries}
\label{sec:prelims}

We consider a competitive information disclosure setting with $N$ agents (\emph{senders}) and a single principal (\emph{receiver}). Each agent has a private quality type (or \emph{value}) drawn from some known prior distribution, and can disclose some partial information regarding this type to the principal via some chosen Bayesian persuasion (or \emph{signaling}) scheme. The principal aims to select a specified number of agents to maximize the sum of selected values; on the other hand, each agent receives some utility from being selected.

\subsection{Agent Values and Signaling Schemes} \label{sec:prelim_signals}

Formally, our setting is parameterized by the number of allowed selections $k$, as well as an \emph{instance}
\[
\nI = (\bnf, \bu),
\]
where $\bnf = (\nf_1, \nf_2, \ldots, \nf_{N})$ denotes the agents' prior value distributions and $\bu = (u_1, \ldots, u_N)$ denotes the agents' utility functions. For each agent $i \in [N]$, her value $v_i$ is drawn independently from the distribution $\nf_i \in \Delta([0, 1])$ over the value space $[0, 1]$.\footnote{Although we assume that the value space is $[0, 1]$ for simplicity, our results hold for any compact subset $\mathcal{V} \subseteq \mathbb{R}_{\ge 0}$ by normalizing the values.}\footnote{Even though we believe that assuming non-negative values on agents is natural (since hiring someone is usually considered at least not ``harmful'' to a company), we provide an example at the end of \cref{app:negative_example} showing that the price of anarchy can be unbounded when agents' values are allowed to be negative.}
For simplicity, we condition on the instance $\nI$ throughout and omit its explicit mention in later definitions. 

For each agent $i$, her strategy is a signaling scheme $\nz_i = (\Sigma_i, \pi_i)$, where $\Sigma_i$ is some measurable space of \emph{signals}, and $\pi_i \colon [0,1] \to \Delta(\Sigma_i)$ is a mapping from a realized value $v$ to a probability measure over $\Sigma_i$.\footnote{Formally, we assume that the function $\pi_i$ is a probability kernel from $[0,1]$ to the signal space $\Sigma_i$, that is, for every measurable set $S \subseteq \Sigma_i$, the mapping $v \mapsto \pi_i(S \mid v)$ is measurable.} Agent $i$ commits to signaling scheme $\nz_i$ before realizing her value; subsequently, after learning $v_i$, she samples a signal $\sigma \in \Sigma_i$ to send to the principal according to the distribution $\pi_i(\cdot \mid v_i)$. 

Given signaling scheme $\nz_i$, we denote the induced distribution of agent $i$'s signal by $\Pi_i \in \Delta(\Sigma_i)$. Formally, for any measurable signal set $S \subseteq \Sigma_i$, the distribution $\Pi_i(S)$ is given by 
\[
\Pi_i(S) = \int_{[0,1]} \pi_i(S \mid v) \d \nf_i(v).
\]

The priors $\{\nf_i\}_{i \in [N]}$ and signaling schemes $\{\nz_i\}_{i \in [N]}$ are common knowledge to the principal and all agents. Let $\bnz = (\nz_1, \ldots, \nz_N)$ denote a strategy profile and denote by $\sigma_i$ the signal sent by agent $i$. Upon receiving $\sigma_i$ from agent $i$, the principal can update his belief about $v_i$ via Bayes' Rule, and compute the posterior mean $e_i(\sigma_i)$ as
\[
e_i(\sigma_i) = \E[v_i \mid \sigma_i] = \frac{\int_{[0,1]} v \cdot \pi_i(\d \sigma_i \mid v) \d \nf_i(v)}{\int_{[0,1]} \pi_i(\d \sigma_i \mid v) \d \nf_i(v)}. \footnote{The notation $\pi_i(\cdot \mid v_i)$ is a probability measure over the signal space $\Sigma_i$. Note that ``$\d\sigma_i$'' represents an infinitesimal element of space from which the signal $\sigma_i$ is drawn. This notation allows us to seamlessly handle signaling schemes that are continuous or discrete. For example, if $\pi_i(\cdot \mid v)$ is continuous and has a density function $p_i(\sigma_i \mid v)$, then the notation $\pi_i( \d \sigma_i \mid v)$ is equivalent to $p_i(\sigma_i \mid v) \d \sigma_i$. On the other hand, if the agent sends one of a finite set of signals, the measure $\pi_i(S \mid v)$ is a sum of probability masses of signals in $S$. Then the notation $\pi_i(\d \sigma_i \mid v)$ simply refers to the probability mass on the signal $\sigma_i$.}
\]
We emphasize our assumption that agent $i$'s strategic behavior is only in the selection of the scheme $\nz_i$. Once this scheme is fixed, the agent signals her value truthfully according to this scheme. This ability to commit to a scheme is a main feature of \emph{Bayesian persuasion}~\citep{kamenica2011bayesian} and is what allows agents to meaningfully affect the principal's decision (i.e., makes the signals go beyond ``cheap talk'').

\subsection{Individual Utilities and Social Welfare} \label{sec:prelim_utilities_and_SW}

If agent $i \in [N]$ is selected with value $v$, she receives utility $u_i(v)$. We assume that the function $u_i\colon [0,1] \to \mathbb{R}_{\ge 0}$ is continuous and monotonically increasing, with the property that $u_i(v) > 0$ for any $v > 0$.\footnote{Otherwise, if $u_i(v) \equiv 0$ for all agents, the PoA can become trivially unbounded, since any strategy profile will be a Nash equilibrium.}

The aim of the principal is to maximize the sum of the values of selected agents. Consequently, given the signals, he selects the $k$ agents with the highest posterior mean, breaking ties at random. Formally, for a signal profile $\bm{\sigma} = (\sigma_1, \ldots, \sigma_N) \in \bigotimes_{i = 1}^N \Sigma_i$, we can sort the posterior means $e_i(\sigma_i)$ in decreasing order
\[
e_{(1)}(\bm{\sigma}) \ge e_{(2)}(\bm{\sigma}) \ge \cdots \ge e_{(N)}(\bm{\sigma}),
\]
and define the social welfare as the sum of the $k$ highest posterior means,\footnote{More generally, our model allows the principal to get value $w(v)$ by selecting an agent with realized value $v$, for any strictly increasing function $w\colon [0,1] \to \mathbb{R}_{\ge 0}$ (by appropriately transforming the value distributions). } i.e., $
\SW(\bm{\sigma}) = \sum_{j = 1}^{k} e_{(j)}(\bm{\sigma})$. 

We define the \emph{first-best social welfare} (i.e., the expected maximum social welfare achievable when the principal observes all agents' values) as follows.
\begin{definition}[First-Best Social Welfare]
\label{def:bestwelfare}
Given instance $\nI = (\bnf, \bu)$, the \emph{first-best (social) welfare} on $\nI$ is
\[
\SW^{\max}(\nI) = \E_{(v_1, v_2, \ldots, v_N) \sim \bigotimes_{i = 1}^N \nf_i} \left[\max_{S \subseteq [N],\ |S| = k} \sum_{i \in S} v_i \right].
\]
\end{definition}

Denote the agents selected without tying with others by $\nw(\bm{\sigma})$, and denote agent(s) with posterior mean equal to $e_{(k)}(\bm{\sigma})$ (i.e., those who possibly form a tie and get selected) by $\nt(\bm{\sigma})$. We have
\[
\nw(\bm{\sigma}) = \{i \in [N] : e_i(\sigma_i) > e_{(k)}(\bm{\sigma})\}; \quad \nt(\bm{\sigma}) = \{i \in [N] : e_i(\sigma_i) = e_{(k)}(\bm{\sigma})\}.
\]
Denote agent $i$'s probability of being selected on the signal profile $\bm{\sigma}$ by $\rho_i(\bm{\sigma})$. Since the principal breaks ties at random, $\rho_i(\bm{\sigma})$ is given by:
\[
\rho_i(\bm{\sigma}) = \begin{cases}
    1 & \text{if $i \in \nw(\bm{\sigma})$,} \\
    \frac{k - |\nw(\bm{\sigma})|}{|\nt(\bm{\sigma})|} & \text{if $i \in \nt(\bm{\sigma})$,} \\
    0 & \text{otherwise}.
\end{cases}
\]

The utility of agent $i$ on the signal profile $\bm{\sigma}$ is given by
\[
u_i(\bm{\sigma}) = \rho_i(\bm{\sigma}) \cdot \E[u_i(v_i) \mid \sigma_i] = \rho_i(\bm{\sigma}) \cdot \frac{\int_{[0,1]} u_i(v) \cdot \pi_i(\d \sigma_i \mid v) \d \nf_i(v)}{\int_{[0,1]} \pi_i(\d \sigma_i \mid v) \d \nf_i(v)}.
\]

\subsection{Nash Equilibrium and Price of Anarchy}

We study the game when all agents' strategies form a Nash equilibrium with respect to the instance $\nI = (\bnf, \bu)$. A strategy profile is a Nash equilibrium if for every agent $i$, her strategy yields the highest utility given all others' strategies fixed. In other words, she cannot benefit by deviating to any other strategy. We now formally define this using the notation introduced above.

First, under any strategy profile $\bnz = (\nz_1, \ldots, \nz_N)$, the realized signal profile $\bm{\sigma} = (\sigma_1, \ldots, \sigma_N)$ is sampled from a joint distribution given by the product measure  
\[
\bm{\Pi} = \Pi_1 \times \Pi_2 \times \cdots \times \Pi_N.
\]
Moreover, under strategy profile $\bm{\nz}$, the expected utility of each agent $i$, as well as the expected social welfare, are given by
\begin{align*}
u_i(\bm{\nz}) &= \E_{\bm{\sigma} \sim \bm{\Pi}}\left[u_i(\bm{\sigma})\right] = \int_{\bigotimes_{i = 1}^N \Sigma_i} \rho_i(\bm{\sigma}) \cdot \E[u_i(v_i) \mid \sigma_i] \d  \bm{\Pi}(\bm{\sigma});\\
\SW(\bm{\nz}) &= \E_{\bm{\sigma} \sim \bm{\Pi}}\left[\SW(\bm{\sigma})\right] =  \E_{\bm{\sigma} \sim \bm{\Pi}}\left[\sum_{j = 1}^{k} e_{(j)}(\bm{\sigma}) \right] = \int_{\bigotimes_{i = 1}^N \Sigma_i}\left(\sum_{j = 1}^{k} e_{(j)}(\bm{\sigma}) \right) \d \bm{\Pi}(\bm{\sigma}).
\end{align*}
Using this notation, we can have the following definition.
\begin{definition}[Nash Equilibrium Signaling]
Given an instance $\nI = (\bnf, \bu)$, a strategy profile \[\bm{\nz} = (\nz_1, \nz_2, \ldots, \nz_N)\] is a Nash equilibrium on $\nI$ if and only if for every agent $i \in [N]$ and any alternative signaling scheme $\nz_i'$, we have 
\[
u_i(\nz_1, \ldots, \nz_{i - 1}, \nz_i, \nz_{i + 1}, \ldots, \nz_N) \ge u_i(\nz_1, \ldots, \nz_{i - 1}, \nz_i', \nz_{i + 1}, \ldots, \nz_N).
\]
\end{definition}

Finally, we can define the price of anarchy ($\PoA$) as the ratio between the first-best welfare (i.e., the welfare when the principal observes every agent's true value), and the worst-case expected welfare attained under any Nash equilibrium. For any instance $\nI$, let $\bfz_\NE(\nI)$ denote the set of all Nash equilibrium strategy profiles on $\nI$. Then we can formally define the $\PoA$ as follows.
\begin{definition}[Price of Anarchy]
Given instance $\nI = (\bnf,\bu)$ where $\bfz_\NE(\nI) \neq \varnothing$, 
the \emph{price of anarchy} (\emph{PoA}) for $\nI$ is defined as
\[
\PoA(\nI) = \frac{\SW^{\max}(\nI)}{\inf_{\bm{\nz} \in \bfz_{\NE}(\nI)} \SW(\bm{\nz}, \nI)}
\]
where $\SW(\bm{\nz}, \nI)$ is the expected social welfare under the strategy profile $\bnz$ for instance $\nI$.
\end{definition}

\section{Lower Bounds on the Price of Anarchy}
\label{sec:lb}

In this section, we establish lower bounds on the price of anarchy (PoA) for our problem. We do so by explicitly evaluating $\PoA(\nI)$ under a simple setting, where the \principal selects only one agent (i.e., $k = 1$), and with $N$ symmetric agents with {\em Bernoulli values}---where for every $i\in [N]$, $\nf_i$ is an i.i.d.\@ $\ber(\zeta)$ distribution---and {\em constant utility of selection}---where all agents get utility $u_i(0) = u_i(1) = 1$ for $v \in \{0, 1\}$ when selected. 
This setting is considered by \citet{AuKawai}, who show that in this case, a symmetric Nash equilibrium exists, and is unique.

\begin{theorem}
\label{thm:lb}
Consider a setting with $k = 1$, symmetric agents with $\nf_i\sim\ber(\zeta)$ and $N\zeta \leq  1$, and constant utility of selection for all $i \in [N]$. Then the price of anarchy satisfies
\begin{equation*}
\PoA(\nI) \geq \left(2 - \frac{1}{N}\right)\left(\frac{1-e^{-N\zeta}}{N\zeta}\right) \in \left[1.264, \ 2 - O\left(\frac{1}{N}\right)\right].
\end{equation*}
\end{theorem}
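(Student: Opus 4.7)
The plan is to use the explicit symmetric Nash equilibrium identified by \citet{AuKawai} for this setting, compute both the first-best welfare and the equilibrium welfare in closed form, and then derive the claimed lower bound via the standard inequality $(1-\zeta)^N \leq e^{-N\zeta}$. Since $\PoA(\nI)$ is defined via an infimum over all Nash equilibria, the welfare of any single equilibrium gives a valid upper bound on the denominator, so it suffices to evaluate the symmetric one.

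In the symmetric equilibrium, each agent sends a signal whose posterior mean is distributed according to the CDF $G(s) = (s/\bar{s})^{1/(N-1)}$ on $[0,\bar{s}]$. I would verify this is an equilibrium as follows: $G(s)^{N-1} = s/\bar{s}$ is the probability that a fixed agent with posterior mean $s$ strictly exceeds $N-1$ independent copies of $G$, so an agent deploying any scheme whose induced posterior-mean distribution is $D$ obtains expected utility $\E_{s\sim D}[s/\bar{s}] = \E_D[s]/\bar{s} = \zeta/\bar{s}$, where the last equality uses Bayes plausibility $\E_D[s] = \zeta$; this is independent of $D$, so every Bayes-plausible scheme (including $G$ itself) is a best response. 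The endpoint $\bar{s}$ is pinned down by the requirement $\E_G[s] = \zeta$; a direct integration of $s$ against the density of $G$ gives $\E_G[s] = \bar{s}/N$, hence $\bar{s} = N\zeta$, which lies in $[0,1]$ by the hypothesis $N\zeta \leq 1$.

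Next I compute the two welfare quantities. For the first-best, since $k=1$ and the values are i.i.d.\ $\ber(\zeta)$,
\[
\SW^{\max}(\nI) = \Pr\!\left[\max_i v_i = 1\right] = 1-(1-\zeta)^N.
\]
For the equilibrium, the principal's welfare is the expected maximum of $N$ i.i.d.\ samples from $G$; using $\E[\max] = \int_0^{\bar{s}}(1-G(s)^N)\,\d s$ with $G(s)^N = (s/\bar{s})^{N/(N-1)}$, a one-line integration gives $N\bar{s}/(2N-1) = N^2\zeta/(2N-1)$. Taking the ratio yields
\[
\PoA(\nI) \geq \frac{(2N-1)\bigl(1-(1-\zeta)^N\bigr)}{N^2\zeta} = \left(2-\frac{1}{N}\right)\cdot\frac{1-(1-\zeta)^N}{N\zeta},
\]
and then applying $(1-\zeta)^N \leq e^{-N\zeta}$ yields the stated inequality. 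The range $[1.264,\, 2-O(1/N)]$ follows from the fact that $(1-e^{-x})/x$ is decreasing on $(0,1]$, with minimum $1-1/e \approx 0.632$ at $x=1$ and limit $1$ as $x \to 0^+$, combined with $2 - 1/N \to 2$.

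Since the entire argument reduces to two integrations and a standard exponential inequality once the equilibrium form is invoked, there is no serious obstacle; the only nontrivial ingredient is the equilibrium characterization, which is already available from \citet{AuKawai}.
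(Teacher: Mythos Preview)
Your proposal is correct and follows essentially the same route as the paper: cite the Au--Kawai symmetric equilibrium with posterior-mean CDF $G(s)=(s/\hat p)^{1/(N-1)}$ on $[0,\hat p]$ where $\hat p=N\zeta$, compute $\SW^{\max}=1-(1-\zeta)^N$ and the equilibrium welfare $N\hat p/(2N-1)$, take the ratio, and apply $(1-\zeta)^N\le e^{-N\zeta}$. The only cosmetic differences are that the paper evaluates the equilibrium welfare by summing per-agent contributions $\int_0^{\hat p}(v/\hat p)\,v\,\d G(v)$ rather than your tail-integral $\int_0^{\hat p}(1-G(s)^N)\,\d s$, and that you add a short best-response verification the paper simply delegates to the citation; neither changes the argument.
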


\begin{proof}
Let $\hp = \zeta \cdot N \leq 1$ by assumption. We first note that the first-best social welfare for this setting is the probability that at least one agent has value $1$, i.e.,
\begin{equation}
\label{eq:lowerboundfb}
\SW^{\max}(\nI) = 1 - (1 - \zeta)^N \geq 1 - e^{-\hp}.
\end{equation}

Turning to signaling, observe that if two signals $\sigma_{i, 1}$ and $\sigma_{i, 2}$ share the same posterior mean, combining these into a single signal affects neither agent $i$'s utility nor welfare. Therefore, without loss of generality, we can assume that there is at most one signal for each posterior mean value. For convenience, we use a signal's posterior mean to represent the signal, thus $\Sigma_i \subseteq [0,1]$ for all agents $i$, and for any $\sigma_i \in \Sigma_i$, it holds that $e_i(\sigma_i) = \sigma_i$. The signaling scheme $\nz_i$ can therefore be represented by a posterior mean distribution (or equivalently, induced signal distribution) $\Pi_i \in \Delta([0, 1])$.
 
For agent $i$, denote the cumulative distribution function (cdf) of strategy $\Pi_i$ by $
G_i(v) = \Pr_{x \sim \Pi_i}[x \le v]$ for $v \in [0, 1]$. By the work of \citet[Section 3.1]{AuKawai}, since $\hp = \zeta \cdot N < 1$, the symmetric Nash equilibrium (that is, all $N$ agents play the same strategy) exists, and in this equilibrium, each agent's strategy $\Pi_i = \Pi$ has the same cumulative distribution $G_i = G$ as follows:
\begin{equation*}
    G(v) =
    \begin{cases}
    \left( \frac{v}{\hp} \right)^{\frac{1}{N-1}} & \quad \text{when } v \in [0, \hp], \\
    1 & \quad \text{when } v \in (\hp, 1]. 
\end{cases}
\end{equation*}
Denote the strategy profile where everyone plays the above strategy by $\bnz$. 

Let $\rho_i(v; G)$ denote the probability that agent $i$ is selected conditioned on sending a signal (with mean) $v$ and choosing the strategies of all other agents according to $G$. Then, it follows that
\begin{equation*}
    \rho_i(v; G) = \left(G(v)\right)^{N-1} = 
    \begin{cases}
        \frac{v}{\hp} & \quad \text{when } v \in [0, \hp], \\
        1 & \quad \text{when } v \in (\hp, 1].
    \end{cases}
\end{equation*}
The expected social welfare contributed from each agent $i$ is given by 
\begin{align*}
    \SW_i(\bnz) = \int_{0}^{\hp} \rho_i(v; G) \cdot v \d G(v) & = \int_{0}^{\hp}  \frac{v}{\hp} \cdot v \cdot \left ( \frac{v}{\hp} \right)^{\frac{1}{N-1} - 1} \cdot \frac{1}{\hp \cdot (N - 1)} \d v \\
     & = \int_{0}^{\hp} \left ( \frac{v}{\hp} \right)^{\frac{N}{N-1}} \cdot \frac{1}{N - 1} \d v \\ 
     & = \frac{\hp}{2N - 1}. %
\end{align*}
Since the equilibrium is symmetric, the overall social welfare is $\SW(\bnz) = N \cdot \SW_i(\bnz) = \frac{N \hat{p}}{2N - 1}$. 
Therefore, combining with~\cref{eq:lowerboundfb}, we get
\[\PoA(\nI) = \frac{\SW^{\max}(\nI)}{\SW(\bnz)} \geq \frac{(2N - 1)}{N\hp} \cdot \left(1 - e^{-\hp}\right) = \left(2-\frac{1}{N}\right)\cdot\left(\frac{1-e^{-\hp}}{\hp}\right). 
\]
This completes the proof of \cref{thm:lb}.
\end{proof}
By \cref{thm:lb}, when $\zeta$ is close to $0$ (so that $\hat{p} = N \zeta$ is small), the lower bound on the price of anarchy approaches $2$.

\paragraph{Unbounded PoA for negative-valued agents.} \label{app:negative_example}
We can construct an example where the PoA is unbounded when agents' values can be negative from the lower bound example above. Suppose we shift every agent's value distribution to the left by $\SW_i(\mathcal{Z}) - \epsilon$. Every agent using the same signaling scheme remains in equilibrium since their utilities only depend on whether they are selected. Therefore, the expected welfare in the equilibrium is $\epsilon$. The $\SW_{\max}$ is also decreased by the same gap amount, but will be much larger than $\epsilon$ (since originally there is a constant gap between these two welfare values). This leads to an unbounded PoA when the values can be negative.

\section{Warm-up: Identical Priors and Constant Utility of Selection}
\label{sec:warmup}
In this section, as a warm-up, we present a constant upper bound on PoA when all agents have identical priors, each agent gets constant utility from being selected, and the number of agents selected is $k = 1$. The ideas developed here form the basis for the more elaborate analysis of general priors and utility functions in the next section.  We show the following theorem.

\begin{theorem}
\label{thm:warmup}
When all agents have identical priors, each agent receives a constant utility $u_i(v)= 1$ for every $v \in [0,1]$ when selected, and the number of agents selected is $k = 1$, then the PoA is at most $4$.
\end{theorem}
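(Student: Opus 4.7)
The plan is to combine the known characterization of the symmetric Nash equilibrium (established for this setting by \citet{AuKawai}) with a direct analysis of the equilibrium welfare versus the first-best welfare. By their result, the symmetric equilibrium has each agent using the same signaling policy $\Pi$ with CDF $G$. Since each agent's utility is exactly her probability of being selected, the best-response condition (a Bayesian-persuasion problem with linear objective over the mean-preserving contractions of $\nf$) forces $G(s)^{N-1}$ to be affine on the support of $\Pi$; writing the support as the interval $[a, b] \subseteq [0, 1]$, I would obtain $G(s)^{N-1} = (s-a)/(b-a)$, and combining with the Bayes-plausibility mean condition yields $b - a = N(\mu - a)$, where $\mu := \E_\nf[v]$. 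Each agent wins with probability exactly $1/N$.

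From this explicit form, direct integration shows that the equilibrium welfare is
\[
\SW(\bnz) = \E[\max_i \sigma_i] = a + (b - a) \cdot \tfrac{N}{2N-1} \;\geq\; \tfrac{a+b}{2} \;\geq\; \tfrac{b}{2},
\]
and applying the best-response condition to the full-revelation deviation (which is always Bayes plausible) yields the tail bound $\Pr_{v \sim \nf}[v > b] \leq 1/N$ (otherwise the deviator wins with probability strictly above $1/N$). I would then bound the first-best welfare via the prophet-style decomposition
\[
\E[\max_i v_i] \;\leq\; b + \E[(\max_i v_i - b)^+] \;\leq\; b + N \cdot \E[(v - b)^+] \;\leq\; b + N \cdot \Pr[v > b] \;\leq\; b + 1,
\]
using $v \in [0,1]$ in the third inequality and the tail bound in the last.

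The PoA bound then follows by a case analysis on $b$. When $b \geq 1/2$, the trivial bound $\E[\max_i v_i] \leq 1$ combined with $\SW(\bnz) \geq b/2 \geq 1/4$ immediately yields $\PoA \leq 4$. When $b < 1/2$, I would argue from Bayes plausibility and the mean condition that $b = \min\bigl(N\mu,\, \sup\mathrm{supp}(\nf)\bigr)$; together with the trivial bounds $\E[\max_i v_i] \leq N\mu$ and $\E[\max_i v_i] \leq \sup\mathrm{supp}(\nf)$, this gives $\E[\max_i v_i] \leq b$, so $\PoA \leq 2$. The main technical obstacle is verifying the structural characterization of the equilibrium---specifically, that the support of $\Pi$ is an interval on which $G^{N-1}$ is affine with the claimed parameters---which requires handling the concavification of the best-response problem subject to the mean-preserving-contraction formulation of Bayes plausibility, particularly when the prior does not have sufficient spread to realize the unconstrained bang-bang optimum.
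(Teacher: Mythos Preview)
Your approach is genuinely different from the paper's. The paper never characterizes the equilibrium; it fixes an \emph{arbitrary} Nash equilibrium $\bnz$, lets $M$ be the largest posterior mean, and proves via a single deviation that $\Pr[M < E_{(2/N)}] \le 1/2$, where $E_{(q)}$ is the conditional mean of the top $q$-quantile of $\nf$. The deviation is: the agent $i^*$ with selection probability $r_{i^*}(\bnz)\le 1/N$ switches to a binary scheme sending a high signal exactly on her top $2/N$-quantile; if $\Pr[M<E_{(2/N)}]>1/2$ she would then win with probability strictly above $1/N$, contradicting equilibrium. Combining $\SW(\bnz)=\E[M]\ge\tfrac12 E_{(2/N)}$ with $\SW^{\max}\le E_{(1/N)}$ and the monotonicity $q\cdot E_{(q)}$ non-decreasing (so $E_{(1/N)}\le 2E_{(2/N)}$) yields the factor $4$. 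The argument is self-contained, applies to every equilibrium whether symmetric or not, and is precisely the template that is generalized in \cref{sec:general}.

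Your route instead imports the explicit Au--Kawai characterization and computes both sides. There are real gaps. First, the structural claim that $G^{N-1}$ is affine on a single interval $[a,b]$ is only the \emph{unconstrained} best response; when the mean-preserving-contraction constraint binds, the equilibrium $G$ equals $F$ on binding sub-intervals and is affine in $G^{N-1}$ only on the slack pieces, so neither $b-a=N(\mu-a)$ nor $\SW(\bnz)\ge b/2$ need hold globally. You flag this yourself, but it is not a detail: the regime $b<1/2$ is exactly where the constraint is most likely to bite. Second, even granting the single-interval form, the identity $b=\min\bigl(N\mu,\sup\mathrm{supp}(\nf)\bigr)$ does not follow from what you have written; $b=N\mu-(N-1)a$ with $a>0$ gives $b<N\mu$ without forcing $b=\sup\mathrm{supp}(\nf)$, so the step $\E[\max_i v_i]\le b$ is unjustified. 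Third, $\PoA$ is an infimum over \emph{all} equilibria, so you would also need uniqueness (or that every equilibrium weakly dominates the symmetric one in welfare), which you do not argue. The paper's deviation argument sidesteps all three issues simultaneously.
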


\subsection{Proof of \texorpdfstring{\cref{thm:warmup}}{Theorem~\ref{thm:warmup}}}
Fix any instance $\nI$ (and we omit the dependence on $\nI$ in expressions for brevity). Since every agent has the same prior, we denote the common distribution as $\nf$. Let $v(q)$ denote the $q$-th quantile of $\nf$; that is, $\Pr[x \le v(q)] = q$ for $x \sim \nf$. We also denote the conditional mean of the values in the top $q$ quantile of each agent by 
\[
E_{(q)} = \frac{1}{q} \cdot \int_{1 - q}^1 v(x) \d x.
\]
Note that by definition, $q \cdot E_{(q)}$ is non-decreasing in $q \in (0,1]$.

Since the \principal can allocate a total selection probability of $1$ among all agents, an upper bound of social welfare is achieved in the scenario when the \principal assigns the entire probability to the top $1/N$ quantile of each agent's distribution. Therefore, the optimal expected social welfare satisfies
\[
\SW^{\max} \le \sum_{i \in [N]} E_{(1 / N)} \cdot (1 / N) = E_{(1 / N)}.
\]

Now consider any Nash Equilibrium $\bnz$, and let $M$ be the random variable representing the highest posterior mean among all agents. Since the \principal selects the agent with the highest posterior mean, we have $\SW(\bnz) = \E[M]$. The technical crux of our argument, formalized below in~\cref{lem:const_utility}, lies in showing that $\Pr[M \ge E_{(2 / N)}] \ge 1/2$. Combining these, we get 
\begin{align*}
    \SW(\bnz) = \E[M] &\ge E_{(2 / N)} \cdot 1/2 
     \ge \frac{E_{(2 / N)} \cdot (2 / N)}{2 / N} \cdot \frac12 
     \ge \frac{E_{(1 / N)} \cdot (1 / N)}{2 / N} \cdot \frac12 
     \ge \frac{\SW^{\max}}{4}. 
\end{align*}
This completes the proof of \cref{thm:warmup}.

\begin{lemma} \label{lem:const_utility}
Under any equilibrium $\bnz$, let $M$ denote the highest posterior mean among all agents. Then $$\Pr[M < E_{(2 / N)}] \le 1 / 2.$$
\end{lemma}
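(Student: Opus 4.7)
I would argue by contradiction: assume that under some equilibrium $\bnz$ we have $\Pr[M < E_{(2/N)}] > 1/2$, and construct, for some agent $i^*$, a deviating signaling scheme whose expected utility strictly exceeds $u_{i^*}(\bnz)$, violating the Nash equilibrium condition.

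The first step picks the deviator by an averaging argument. Because $k=1$ and each agent has constant utility $1$ upon selection, the sum of equilibrium utilities is
\[
\sum_{i \in [N]} u_i(\bnz) \;=\; \E_{\bsi \sim \bpi}\Bigl[\sum_i \rho_i(\bsi)\Bigr] \;=\; 1,
\]
since the principal selects exactly one agent on every signal profile. Hence there exists some agent $i^*$ with $u_{i^*}(\bnz) \le 1/N$.

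Next, I would construct a deviation $\nz'_{i^*}$ for $i^*$ that ``reveals whether $v_{i^*}$ lies in the top $2/N$ quantile of $\nf$''. Concretely, the scheme emits a designated signal $\sigma_H$ with probability exactly $2/N$, sent precisely when $v_{i^*}$ lies in the top $2/N$ quantile (splitting any atom at the boundary quantile $v(1-2/N)$ so that the total probability of $\sigma_H$ is exactly $2/N$), and emits a different signal $\sigma_L$ otherwise. By the definition of $E_{(q)}$, the posterior mean of $\sigma_H$ equals exactly $E_{(2/N)}$.

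Finally, I would lower bound the deviation utility. Let $M_{-i^*}$ denote the highest posterior mean among the other $N-1$ agents under their unchanged equilibrium strategies. Because $M \ge M_{-i^*}$ pointwise, the hypothesis gives $\Pr[M_{-i^*} < E_{(2/N)}] > 1/2$. Since the event ``$\sigma_H$ is sent'' depends only on $v_{i^*}$, it is independent of $M_{-i^*}$; and whenever both ``$\sigma_H$ is sent'' and ``$M_{-i^*} < E_{(2/N)}$'' occur, agent $i^*$ is the unique top-ranked agent and is selected with probability $1$. Combining these,
\[
u_{i^*}(\nz'_{i^*}, \bnz_{-i^*}) \;\ge\; \frac{2}{N}\cdot \Pr\bigl[M_{-i^*} < E_{(2/N)}\bigr] \;>\; \frac{1}{N} \;\ge\; u_{i^*}(\bnz),
\]
contradicting the Nash equilibrium condition for $i^*$. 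The only mildly delicate point is ensuring Bayes plausibility of $\nz'_{i^*}$ when $\nf$ has an atom at the boundary quantile $v(1-2/N)$, which is handled by a standard randomized split of the atom; everything else is just the averaging observation followed by the independence-based selection calculation.
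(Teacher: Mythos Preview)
Your argument is correct and follows essentially the same route as the paper: pick $i^*$ by averaging ($\sum_i u_i(\bnz)=1$ so some $u_{i^*}(\bnz)\le 1/N$), construct the binary top-$2/N$-quantile deviation with posterior mean $E_{(2/N)}$, and use $\Pr[M_{-i^*}<E_{(2/N)}]\ge \Pr[M<E_{(2/N)}]>1/2$ together with independence to get deviation utility strictly above $1/N$. Your explicit mention of the independence of the $\sigma_H$ event from $M_{-i^*}$ and the atom-splitting detail are exactly the points the paper handles implicitly and in a footnote, respectively.
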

\begin{proof}
Fix equilibrium $\bnz$, and let $r_i(\bnz)$ be the overall probability of agent $i$ being selected under the equilibrium strategy profile $\bnz$, i.e., 
\[
r_i(\bnz) = \int_{\PSi} \rho_i(\bsi) \d \bm{\Pi}(\bsi).
\]
Since each agent gets utility $1$ when selected, $r_i(\bnz)$ is also the expected utility of agent $i$ under $\bnz$. 
Moreover, since the \principal can allocate a total  selection probability of $1$ across all $N$ agents, we have $\sum_{i \in [N]} r_i(\bnz) = 1$.
Thus, there exists at least one agent, say $i^*$, receiving utility $r_{i^*}(\bnz) \le 1 / N$.

Let $M_{-i^*}$ denote the highest posterior mean among all agents except $i^*$. 
We now prove the result by contradiction, by showing that if $\Pr[M < E_{(2 / N)}] > 1 / 2$, then agent $i^*$'s best response to $\{\nz_j\}_{j\neq i^*}$ gets utility strictly higher than $1/N$. 
Note that in this case, since the random variable $M_{-i^*}$ is stochastically dominated by $M$, we have 
    \begin{equation} \label{eqn:const_utility_deviation}
        \Pr[M_{-i^*} < E_{(2 / N)}] \geq \Pr[M < E_{(2 / N)}] > 1 /2.        
    \end{equation}

Now consider a potential deviating strategy by agent $i^*$, where instead of $\nz_i$, she uses a binary signaling scheme $\nz_{i^*}' = (\{s_1, s_2\}, \pi_{i^*}')$ defined as follows:
\begin{itemize}
    \item [1.] When $i^*$'s value falls in the top $2/N$ quantile of $\nf$, she sends signal $s_1$.
    \item [2.] Otherwise, she sends $s_2$. 
\end{itemize}
Formally, for any $S \subseteq \{s_1, s_2\}$,  $\pi_{i^*}'(S \mid v)$, let
\begin{equation*}
    \pi_{i^*}'(S \mid v) = 
    \begin{cases}
         \mathbb{1} [s_1 \in S] & \text{if $v$ is in the top $(2 / N)$ quantile of $\nf$,}\\
         \mathbb{1} [s_2 \in S] & \text{otherwise.\footnotemark}
      \end{cases}
\end{equation*}\footnotetext{For simplicity, we omit the case when the top $2/N$ quantile splits a mass point of $\nf$. In that case, we can split the value mass and let the higher quantile part map to $s_1$ and the lower part map to $s_2$.}%

Note that by construction of $\pi_{i^*}'$, the posterior mean of agent $i^*$'s value under signal $s_1$ is $E_{(2 / N)}$. Thus, whenever $M_{-i^*}<E_{(2/N)}$ and agent $i^*$ sends $s_1$, she is selected by the \principal. By \cref{eqn:const_utility_deviation}, conditioned on sending signal $s_1$, agent $i^*$ is selected with probability strictly greater than $1 / 2$. Moreover, since $s_1$ is sent with probability $2 / N$, agent $i^*$'s expected utility by deviating to $\nz_{i^*}'$ is at least 
\[
\Pr[\text{$s_1$ is sent}] \cdot \Pr[\text{$i^*$ is selected} \mid \text{$s_1$ is sent}] > \frac{1}{2} \cdot \frac{2}{N} = \frac{1}{N}.
\]
Since we chose $i^*$ such that $r_{i^*}(\bnz) \le 1/N$, this deviation contradicts that $\bnz$ is a Nash equilibrium.
\end{proof}

\subsection{Discussion}
The above proof critically relies on the two key properties of this setting---identical priors and constant utilities. We now discuss challenges that arise when extending to the general case with non-identical priors, heterogeneous utilities, and $k$-selection, and how the techniques introduced here can be adapted.

A central idea in our analysis is the use of \emph{quantile cuts} to capture the upper tail of each agent's value. In the warm-up case, since all agents share the same prior and $k = 1$, we apply a uniform $\frac1N$ quantile cut to each agent's distribution. This directly yields an upper bound on the maximum social welfare as the conditional mean of the values in the top $\frac1N$ quantile.  When agents have non-identical priors, a natural idea is that we can define a personalized quantile for each agent. Though the cut may vary across the agents, the core idea remains the same: We focus on the upper tail to obtain an upper bound of the optimal social welfare. 

However, using a similar approach requires more elaboration. Since now we must deal with heterogeneity between agents and $k$ can be larger than $1$, it may be impossible to construct an equal-mean cut among all agents like the warm-up case. We need to carefully handle the edge-case agents whose values are too large compared to others since the sweeping line may exhaust their entire distribution. Moreover, this method also makes the social welfare cap a combination of these heterogeneous tails, which adds more challenges to our analysis.

Another important idea is the \emph{deviation signal}. In the warm-up, if an agent deviates by reporting a signal formed by her top quantile, she can secure a higher expected payoff if the equilibrium was very inefficient. This deviation is then used to show that the equilibrium welfare must be within a constant factor of the optimum. In the general case below, the deviation strategy becomes more intricate, yet the underlying principle remains the same: By appropriately aggregating the high-value portion of an agent’s information to form a specific signal, we argue that some agent benefits in the scenario when the equilibrium is very inefficient.

Finally, in the warm-up case, all agents' utilities when selected are exactly $1$, so the benefit derived from the outcome is simply the probability of being selected. When the utilities are heterogeneous, and we make an agent deviate to a signal that is mapped to by the top certain quantile of values, then arguing about the new utility of this agent becomes challenging. The improvement in probability of being selected may not translate directly into a better utility. Therefore, we need a more careful analysis on selecting the deviating agent as well as her deviation strategy.

\section{Constant Factor Upper Bound on PoA}
\label{sec:general}
We now show the following theorem.

\begin{theorem}
\label{thm:main}
The PoA of the Bayesian persuasion game with $k$ selection (for any $k \ge 1$) is at most $11 + 5 \sqrt{5}$ when the priors $\nf_i$ are arbitrary independent distributions, and the functions $u_i(\cdot)$ are positive and weakly monotone.
\end{theorem}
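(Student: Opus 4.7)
The plan is to extend the warm-up's quantile/deviation argument to heterogeneous priors, monotone utilities, and $k$-selection via a core--tail decomposition, parameterized by two constants $\alpha, c > 1$ to be optimized at the end; the bound $11 + 5\sqrt{5}$ should emerge from balancing these. For the upper bound on $\SW^{\max}$, I would use the classical ex-ante / multi-unit prophet relaxation: pick a threshold $v^\star$ such that $\sum_i \Pr_{v \sim \nf_i}[v > v^\star] = c k$, set $q_i := \Pr_{v \sim \nf_i}[v > v^\star]$, and let $E_{i,(q_i)} := \E[v \mid v > v^\star]$. Then $\sum_i q_i = c k$ and $\SW^{\max} \le k v^\star + \sum_i q_i\,(E_{i,(q_i)} - v^\star)$, which plays the role of the warm-up inequality $\SW^{\max} \le E_{(1/N)}$, now accommodating heterogeneous priors and $k > 1$.

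Next I would identify a ``sacrificial'' agent $i^*$. Since the principal's total selection probability is at most $k$, we have $\sum_i r_i(\bnz) \le k$, and a weighted-averaging argument with weights $q_i$ produces some $i^*$ whose equilibrium selection probability is small relative to $q_{i^*}$; to handle non-constant $u_i$, I would weight instead by the tail-conditional mean of the utility and pick $i^*$ minimizing the ratio $u_{i^*}(\bnz)/(q_{i^*} \cdot \E[u_{i^*}(v) \mid v > v^\star])$. Then, exactly as in the warm-up, I would construct a binary deviation for $i^*$ that reveals whether $v_{i^*}$ lies in its top $\alpha q_{i^*}$ quantile (the warm-up used $\alpha = 2$); the tail signal has posterior mean $E_{i^*,(\alpha q_{i^*})}$, is sent with probability $\alpha q_{i^*}$, and by monotonicity its conditional expected utility is at least $\E[u_{i^*}(v_{i^*}) \mid v_{i^*} > v^\star]$. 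The remaining step is to lower-bound the probability that this tail signal is selected by a Markov-type bound on the random $k$-th order statistic of the equilibrium posterior means (whose expectation is $\SW(\bnz)/k$), so that the event ``$k$-th posterior is below $E_{i^*,(\alpha q_{i^*})}$'' has probability at least $1 - 1/\alpha$ whenever $\SW(\bnz)$ is small. Substituting the equilibrium inequality $u_{i^*}(\bnz) \ge (\text{deviation utility})$ and rearranging then yields $\SW(\bnz) \ge \SW^{\max}/(11 + 5\sqrt{5})$ after optimizing $\alpha$ and $c$.

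The main obstacle relative to the warm-up is the joint handling of $k$-selection and heterogeneous $u_i$. With $k = 1$ and $u_i \equiv 1$, the argument decouples cleanly: a uniform $1/N$ quantile, a single threshold, and a selection-probability comparison. For general $k$, one must Markov the $k$-th order statistic of the random equilibrium posterior means rather than the maximum, which requires additional care since this order statistic is itself a random variable depending on the equilibrium strategies; and for general monotone $u_i$, one must weave monotonicity in to ensure that a selection gain from the deviation translates into a utility gain, despite the $v^\star$ threshold appearing in the denominator of the ratio defining $i^*$. I expect the final constrained optimization over $\alpha$ (quantile inflation) and $c$ (prophet slack), each controlling a different trade-off between the upper bound on $\SW^{\max}$ and the deviation utility lower bound, to be what yields the specific quadratic-flavored constant $11 + 5\sqrt{5}$.
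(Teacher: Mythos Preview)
Your high-level plan matches the paper's architecture (quantile cuts, a sacrificial agent, a deviation argument), but two of your concrete steps would fail, and the paper's proof hinges precisely on fixing them.

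\textbf{The binary deviation is not a valid improvement under monotone utilities.} In the warm-up the agent's utility equals her selection probability, so pooling all top-quantile values into a single signal can only help. With general monotone $u_i$ this is false. Suppose in equilibrium $i^*$ already sends her very highest values to signals with large posterior mean and is selected there with probability close to $1$; those events contribute most of her utility because $u_{i^*}$ is increasing. Your binary deviation merges those values with the rest of the top $\alpha q_{i^*}$ quantile into one signal whose posterior mean $E_{i^*,(\alpha q_{i^*})}$ may be much smaller, so its selection probability can drop well below $1$. The deviation can therefore \emph{decrease} $i^*$'s utility, and the equilibrium inequality $u_{i^*}(\bnz) \ge (\text{deviation utility})$ tells you nothing. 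The paper's fix is the crux of the whole proof: it partitions $i^*$'s current signals into $S_1$ (selected with probability $\ge \tau$) and $S_2$ (selected with probability $< \tau$), and the deviating scheme \emph{keeps $S_1$ intact} on the top quantile and only merges the $S_2$ mass into a new signal $s^*$. Then, if $s^*$ were also selected with probability $\ge \tau$, every value in the top quantile would be selected with probability $\ge \tau$, and a monotonicity argument (gain at values $\ge v_{i^*}(\phi_{i^*})$ dominates loss below) shows the deviation strictly helps. The contrapositive (selection probability of $s^*$ is $< \tau$) is what forces the $k$-th order statistic to be large.

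\textbf{The averaging step for choosing $i^*$ does not go through as stated.} You propose picking $i^*$ to minimize $u_{i^*}(\bnz)/\bigl(q_{i^*}\,\E[u_{i^*}(v)\mid v > v^\star]\bigr)$, but there is no global bound on $\sum_i u_i(\bnz)$ across heterogeneous utility scales, so no pigeonhole argument produces such an $i^*$. The paper instead works with the \emph{value contribution} $c_i(\bnz) = \E[\rho_i(\bsi)\, e_i(\sigma_i)]$, which sums to $\SW(\bnz)$ and is commensurable across agents. This forces a case split: either every agent in the low-selection group $\nn_2$ has $c_i \ge \beta\, E_i\, q_i$ (and then $\SW(\bnz)$ is directly large), or some $i^*$ has $c_{i^*} < \beta\, E_{i^*}\, q_{i^*}$. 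The latter bound on $c_{i^*}$---not on $u_{i^*}$---is exactly what lets you lower-bound the posterior mean of the recombined signal $s^*$ (\cref{lem:ei}), because small value contribution from $S_1$ implies most of the top-quantile value mass ends up in $s^*$. This interplay, together with an additional split into agents with $E_i$ above or below $E_{i^*}$ to handle $k>1$, is why the paper needs five parameters $(\alpha,\beta,\tau,\phi,Q)$ rather than two; the constant $11+5\sqrt{5}$ comes from balancing the Case~1 bound $\frac{1+Q}{\beta(Q-1/\alpha)}$ against the Case~2 bound $\frac{1+Q}{(1-\tau)(1/\phi - \beta/(\tau\phi))}$ under the constraint $\phi\tau \ge \alpha$.
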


The rest of this section is devoted to proving this theorem. We do so in a sequence of steps that transforms the signaling space of an individual agent, and uses the ``no deviation'' condition in equilibrium to argue about her quality. Our main step is the construction of a deviating signal $s^*$ for a specific agent, and arguing the PoA based on $s^*$ not helping the agent.

For simplicity, we will assume that we are working on a given instance $\nI$ and omit the dependence on $\nI$ in our notations in this section. Our analysis will involve parameters $\alpha, \beta, \tau$ and $\phi$ that we will optimize later. The parameters must satisfy the following constraints: 
\begin{gather*}
    \alpha, \beta, \tau, \phi > 0, \qquad
    \tau < 1, \qquad 
    \alpha > 1, \qquad \beta < \tau, \qquad \phi \cdot \tau \ge \alpha.
\end{gather*}

\subsection{Top Quantile Cuts}  \label{sec:quantile_cuts}
For any $q \in [0, 1]$, let $v_i(q)$ denote the $q$-th quantile of $\nf_i$. For simplicity, we assume that $\nf_i$ contains no point masses, and we will address in footnotes when the argument is different when point masses occur. Our first step is to perform a line sweep of the distributions $\nf_i$ so that the total quantiles in the sweep sum up to $k$. To do so, we first choose a value $E_\cut$, and define agent $i$'s quantile cut
\[
q_i = \sup \left\{x \in [0, 1] : \int_{1-x}^1{v_i(q) \d q} \ge E_\cut \cdot x \right\}.
\]
In other words, $q_i$ is the largest fraction $x$ such that the conditional mean of the top $x$ quantile is at least $E_\text{cut}$. The threshold $E_\text{cut}$ is chosen to satisfy the total quantile cut constraint:\footnote{Note that if $E_{\cut}$ equals the highest value of some $\nf_i$ and $\nf_i$ has a point mass at that value, satisfying \cref{eqn:quantile_cuts} may become impossible. A slight increase in $E_{\cut}$ can cause the corresponding agent's $q_i$ to drop abruptly to $0$. In such cases, we define $E_{\cut}$ as $\sup\{x \in [0, 1] : \sum_{i \in [N]} q_i \ge k \}$ and then reduce those $q_i$ values on the largest value mass until $
\sum_{i \in [N]} q_i = k$.}
\begin{equation} \label{eqn:quantile_cuts}
\sum_{i \in [N]} q_i = k.    
\end{equation}

For agent $i$, denote by $E_i = \E[v_i(q) \mid q > 1 - q_i] = \frac{\int_{1 - q_i}^1 v_i(q) \d q}{q_i}$ her conditioned value mean above the quantile cut $q_i$.\footnote{If $q_i = 0$, we simply set $E_i = E_\cut$.} By definition, we have $E_i \ge E_\cut$ for all $i \in [N]$. We must note that there may be agents for whom $q_i = 1$. For such agents, their conditional mean above $q_i$ (i.e., $E_i$) can differ from the common threshold $E_{\cut}$, because the line stops moving before the sweep process ends. See \cref{fig:line_sweep} for an illustration. 

\begin{figure} [htbp]
    \centering
    \includegraphics[scale = 0.9]{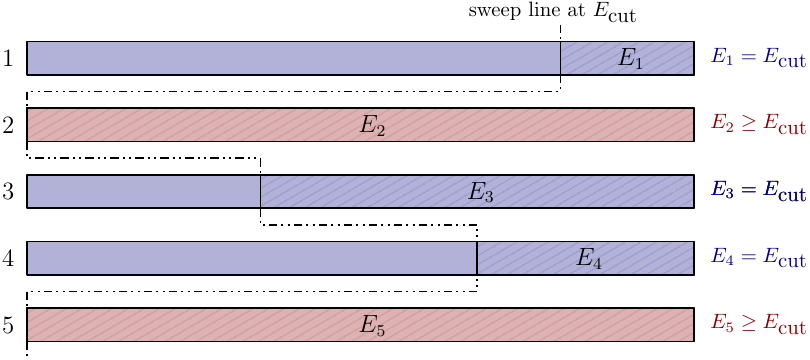}
    \caption{\em An illustration of the final state of the line sweep for an example with $5$ agents. As shown in the figure, agent $2$ and agent $5$ have a mean above the sweep line larger than $E_\cut$, as their quantile cuts satisfy $q_i = 1$. }
    \label{fig:line_sweep}
\end{figure}

Now consider the scenario where the \principal knows everyone's exact value, and selects all agents $i$ whose value is above their own quantile cut $q_i$, and additionally selects $k$ agents with highest values among the remaining agents. The social welfare in this scenario is at most
\[\SW' \le E_\cut \cdot k + \sum_{i \in [N]} E_i \cdot q_i.\]  

Since in this scenario the \principal always selects a (weak) superset of the agents selected under full-revelation of values (where the \principal knows the revealed value of all agents and selects the $k$ agents with top values), the social welfare achieved under full revelation of values $\SW^{\max}$ is at most $\SW'$.

\begin{lemma} \label{lem:swub}
    $\SW^{\max} \le E_{\cut} \cdot k + \sum_{i \in [N]} E_i \cdot q_i$.
\end{lemma}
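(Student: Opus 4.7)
The plan is to execute the ``enlarged selection'' argument sketched just before the lemma. For every realization of values, I would define the random set $A = \{i \in [N] : v_i > v_i(1-q_i)\}$ of agents whose value exceeds their personalized quantile cut, let $B \subseteq [N] \setminus A$ consist of the (at most) $k$ highest-value agents among those outside $A$, and set $T = A \cup B$. I then prove two claims: (i) for every realization and every optimal top-$k$ set $S^\star$, $S^\star \subseteq T$, so $\sum_{i \in T} v_i$ pointwise dominates the first-best sum; and (ii) $\E[\sum_{i \in T} v_i] \le k \cdot E_\cut + \sum_{i \in [N]} q_i E_i$. For (i): $S^\star \cap A \subseteq T$ directly; and because $S^\star$ is a top-$k$ set, every $j \notin S^\star$ satisfies $v_j \le \min_{i \in S^\star} v_i \le \min_{i \in S^\star \setminus A} v_i$, so each element of $S^\star \setminus A$ ranks among the top $k$ of $[N]\setminus A$ and hence lies in $B$. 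Taking expectations gives $\SW^{\max} \le \E[\sum_{i\in T} v_i]$.

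For claim (ii), I would split the enlarged sum into the $A$- and $B$-parts. The $A$-part follows immediately from the definitions of $q_i$ and $E_i$:
\[
\E\Bigl[\sum_{i \in A} v_i\Bigr] = \sum_{i \in [N]} \Pr[v_i > v_i(1-q_i)] \cdot \E[v_i \mid v_i > v_i(1-q_i)] = \sum_{i \in [N]} q_i \cdot E_i.
\]
For the $B$-part, using $|B| \le k$ together with $v_i \le v_i(1-q_i)$ for every $i \in B$, it suffices to prove the pointwise quantile bound $v_i(1-q_i) \le E_\cut$ for every $i$ with $q_i < 1$; agents with $q_i = 1$ lie almost surely in $A$ and so cannot appear in $B$ under the no-point-mass assumption.

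\textbf{The main obstacle} is this pointwise quantile bound, which I would derive from the supremum characterization of $q_i$. At the supremum itself, $\int_{1-q_i}^{1} v_i(q)\,\d q \ge E_\cut \cdot q_i$; for every sufficiently small $\varepsilon > 0$ the defining strict inequality fails just past $q_i$, giving $\int_{1-q_i-\varepsilon}^{1} v_i(q)\,\d q < E_\cut \cdot (q_i + \varepsilon)$. Subtracting these two bounds yields $\int_{1-q_i-\varepsilon}^{1-q_i} v_i(q)\,\d q < E_\cut \cdot \varepsilon$; dividing by $\varepsilon$ and sending $\varepsilon \to 0^+$, continuity of the quantile function (granted by the no-point-mass assumption) gives $v_i(1-q_i) \le E_\cut$, which closes the argument. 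The mild boundary case in which the sweep constraint \cref{eqn:quantile_cuts} forces the degenerate adjustment on the top mass is handled exactly as in the paper's footnote to that equation, and the same proof goes through with the obvious modifications.
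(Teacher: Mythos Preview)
Your proposal is correct and carries out exactly the enlarged-selection argument the paper sketches just before the lemma, filling in the details the paper leaves implicit. One minor correction: the no-point-mass assumption guarantees continuity of the CDF, not of the quantile function (which can still jump at gaps in the support); what your limit argument actually uses, and what suffices, is left-continuity of $v_i(\cdot)$, which holds under the standard definition $v_i(q)=\inf\{v:F_i(v)\ge q\}$, so the conclusion $v_i(1-q_i)\le E_\cut$ for $q_i<1$ still stands.
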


Consider any Nash equilibrium strategy profile $\bnz$. Denote the probability of agent $i$ being selected by 
\[
r_i(\bnz) = \int_{\PSi} \rho_i(\bsi) \d \bm{\Pi}(\bsi).
\]

Fix a constant $\alpha > 1$. We group the agents into two sets $\nn_1$ and $\nn_2$ based on their $r_i(\bnz)$ as:
\[
\nn_1 = \{i \in [N] : r_i(\bnz) \ge \alpha \cdot q_i\}; \qquad \nn_2 = \{i \in [N]: r_i(\bnz) < \alpha \cdot q_i\}.
\]

Since $\sum_{i \in [N]} r_i(\bnz) \le k$, we have $\sum_{i \in \nn_1} q_i \le k/\alpha$. Therefore, $\sum_{i \in \nn_2} q_i \ge (1 - 1 / \alpha) \cdot k$. Since $1 / \alpha < 1$, $\nn_2$ cannot be an empty set. We have the following simple lemma.

\begin{lemma}
    \label{lem:N2}
    If $E_i > E_\cut$, then $i \in \nn_2$. Therefore, $\min_{i \in \nn_2} E_i \ge \max_{i \in \nn_1} E_i$.
\end{lemma}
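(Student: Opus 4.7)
The plan is to prove the lemma via the contrapositive in two short steps. First, I would show that $E_i > E_\cut$ forces $q_i = 1$. Second, I would show that $q_i = 1$ forces $i \in \nn_2$. The second statement (about $\min$ and $\max$) then follows immediately from the contrapositive combined with the bound $E_i \geq E_\cut$ already established in the text.

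For the first step, I would unpack the definition of $q_i$ as a supremum. Under the working assumption that $\nf_i$ has no point masses, the map $x \mapsto \int_{1-x}^1 v_i(q)\d q - E_\cut \cdot x$ is continuous in $x$. Therefore, if $q_i < 1$, the supremum in the definition is attained with equality, giving $\int_{1-q_i}^1 v_i(q)\d q = E_\cut \cdot q_i$, i.e., $E_i = E_\cut$. Contrapositively, $E_i > E_\cut$ can only happen when the sweep never became tight, which is precisely the case $q_i = 1$ (see also \cref{fig:line_sweep}, where agents $2$ and $5$ illustrate this). For the point-mass case handled in the footnote, the same conclusion holds because whenever $q_i < 1$ we can only have $E_i > E_\cut$ if the sweep were to cut through a mass point with value exceeding $E_\cut$, but the tie-breaking rule in that footnote reduces $q_i$ until the conditional mean equals $E_\cut$ exactly.

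For the second step, suppose $q_i = 1$. Recall the parameter constraints fix $\alpha > 1/Q \geq 1$. Then $\alpha \cdot q_i = \alpha > 1$. Since $r_i(\bnz) = \int \rho_i(\bsi)\d\bm{\Pi}(\bsi) \le 1$ (an agent is selected with probability at most one on any signal profile), we get $r_i(\bnz) < \alpha \cdot q_i$, i.e., $i \in \nn_2$ by definition.

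Combining these: any $i \in \nn_1$ must satisfy $E_i \le E_\cut$, and since $E_i \ge E_\cut$ always holds, $E_i = E_\cut$ for every $i \in \nn_1$. Thus $\max_{i \in \nn_1} E_i = E_\cut \le \min_{i \in \nn_2} E_i$, which is the claimed inequality. I do not expect any serious obstacle here; the argument is essentially bookkeeping on the definitions of $q_i$ and $\nn_1,\nn_2$. The only subtlety worth flagging is ensuring the continuity argument in the first step is phrased carefully enough to cover the footnoted point-mass edge case, since the tightness at the supremum is what drives the whole dichotomy.
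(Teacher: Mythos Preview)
Your proposal is correct and follows essentially the same approach as the paper: the paper's proof also argues $E_i > E_\cut \Rightarrow q_i = 1$, then uses $\alpha > 1$ and $r_i(\bnz) \le 1$ to conclude $i \in \nn_2$, and finishes by noting $\max_{i \in \nn_1} E_i = E_\cut$. The only difference is that you spell out the continuity argument for the first implication, whereas the paper simply asserts it as evident from the line-sweep construction.
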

\begin{proof}
   When $E_i > E_\cut$, we have $q_i = 1$. Since $\alpha > 1$, this implies $\alpha \cdot q_i > 1$. However, $r_i(\bnz) \le 1$, which means $i \in \nn_2$. Since $E_j \ge E_\cut$ for all $j$, it must hold that $\max_{i \in \nn_1} E_i = E_{\cut}$, completing the proof. 
\end{proof}

In the rest of the analysis, we will fix a constant $\beta > 0$. We split the analysis based on the value generated by agents in $\nn_2$ when they are selected. 
For each agent $i$, define the \emph{value contribution} of $i$ as
\[
    c_i(\bnz) = \int_{\bigotimes_{i = 1}^N \Sigma_i} \rho_i(\bsi) \cdot e_i(\sigma_i) \d \bm{\Pi}(\bsi),
\]
where $e_i(\sigma_i)$ is the posterior mean when agent $i$ sends signal $\sigma_i$. 

Notice that the overall social welfare in equilibrium $\bnz$ is \[
\SW(\bnz) = \sum_{i \in [N]} c_i(\bnz).
\]

\subsection{Case 1: All Agents in \texorpdfstring{$\nn_2$}{N-2} are Large Contributors}

We prove that when all the agents in $\nn_2$ have large value contribution, the social welfare of $\bnz$ is at least a constant fraction of the optimal social welfare.

\begin{lemma} \label{lem:large_contributors}
    Given a strategy profile $\bnz$, if all agents in $\nn_2$ satisfy $c_i(\bnz) \ge E_i \cdot \beta \cdot q_i$, then we have 
    \[
        \SW(\bnz) \ge \SW' \cdot \frac{\beta \cdot (1 - 1 / \alpha)}{2}.
    \]
\end{lemma}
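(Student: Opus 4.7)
\medskip

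\noindent\textbf{Proof proposal.} The plan is to lower-bound $\SW(\bnz)$ by summing only the contributions from $\nn_2$, upper-bound $\SW'$ by decomposing over $\nn_1$ and $\nn_2$, and then reduce the ratio to a tight two-variable optimization.

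First, since $c_i(\bnz)\ge 0$ for every $i$ and $c_i(\bnz)\ge E_i\cdot \beta\cdot q_i$ for every $i\in \nn_2$ by hypothesis, I would write
\[
\SW(\bnz)\;=\;\sum_{i\in[N]} c_i(\bnz)\;\ge\;\sum_{i\in\nn_2} c_i(\bnz)\;\ge\;\beta\cdot\sum_{i\in\nn_2} E_i\,q_i.
\]
Next, I would decompose the upper bound on the first-best welfare from the previous lemma using the contrapositive of \cref{lem:N2}: since $E_i>E_\cut$ forces $i\in\nn_2$, every $i\in\nn_1$ must have $E_i=E_\cut$. Introducing $S:=\sum_{i\in\nn_2} E_i q_i$ and $T:=\sum_{i\in\nn_1} q_i$, this gives
\[
\SW'\;\le\;E_\cut\cdot k+\sum_{i\in\nn_1} E_i q_i+\sum_{i\in\nn_2} E_i q_i\;=\;E_\cut(k+T)+S.
\]

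The two remaining ingredients are standard consequences of earlier bookkeeping. From the quantile-cut constraint $\sum_{i\in[N]} q_i=kQ$ and $E_i\ge E_\cut$ for all $i$, I get $S\ge E_\cut\sum_{i\in\nn_2} q_i=E_\cut(kQ-T)$. From $\sum_{i\in\nn_1} r_i(\bnz)\le k$ together with $r_i(\bnz)\ge\alpha q_i$ on $\nn_1$, I get $T\le k/\alpha$. Combining these, the ratio $\SW(\bnz)/\SW'$ is at least
\[
\frac{\beta\,S}{E_\cut(k+T)+S}.
\]
This ratio is increasing in $S$, so it is minimized at $S=E_\cut(kQ-T)$, where it equals $\beta(Q-T/k)/(1+Q)$, which is in turn minimized at $T=k/\alpha$, yielding exactly $\beta(Q-1/\alpha)/(1+Q)$.

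There is essentially no obstacle here beyond careful bookkeeping; the only subtle point is invoking the contrapositive of \cref{lem:N2} so that the term $\sum_{i\in\nn_1}E_i q_i$ collapses to $E_\cut\cdot T$ and the algebra telescopes. The bound is tight at $S=E_\cut(kQ-k/\alpha)$ and $T=k/\alpha$, which matches the statement and confirms that no slack has been lost along the way.
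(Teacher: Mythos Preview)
Your proof is correct and uses the same ingredients as the paper: the contrapositive of \cref{lem:N2} to collapse $\sum_{i\in\nn_1}E_iq_i$ to $E_\cut\cdot T$, the bound $T\le k/\alpha$, and the inequality $S\ge E_\cut(kQ-T)$ from $E_i\ge E_\cut$. The only difference is packaging---you frame the final step as an explicit two-variable minimization in $(S,T)$, whereas the paper reaches the same endpoint via a chain of weighted-average inequalities; both arguments are equivalent and yield the identical constant.
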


\begin{proof}
    We have
    \begin{align*}
    \SW(\bnz) & = \sum_{i \in [N]} c_i(\bnz) \ge \sum_{i \in \nn_2} c_i(\bnz)
    \ge \beta \cdot \sum_{i \in \nn_2} E_i \cdot q_i \\
    & \ge \beta \cdot \left(\sum_{i \in [N]} q_i \cdot E_i \right) \cdot \frac{\sum_{i \in \nn_2} q_i}{\sum_{i \in [N]} q_i} \tag{\cref{lem:N2}}\\
    & \ge \beta \cdot \left(\sum_{i \in [N]} q_i \cdot E_i + k \cdot E_\cut \right) \cdot \frac{\sum_{i \in [N]} q_i}{k + \sum_{i\in [N]} q_i}\cdot \frac{\sum_{i \in \nn_2} q_i}{\sum_{i \in [N]} q_i} \tag{Since $E_i \ge E_\cut$ for any agent} \\
    & \ge \beta \cdot \SW' \cdot \frac{k}{1 + 1} \cdot \frac{1 - 1 / \alpha}{k} = \SW' \cdot \frac{\beta \cdot (1 - 1 / \alpha)}{2}. 
    \end{align*}
    This completes the proof.
\end{proof}

Since $\SW^{\max} \le \SW'$, it follows that the Price of Anarchy is bounded by
$\frac{2}{\beta \cdot (1 - 1 / \alpha)}$ in this case. 

\subsection{Case 2: Some Agent in \texorpdfstring{$\nn_2$}{N-2} Is a Small Contributor}
Now assume that there exists an agent $i^* \in \nn_2$ for which \[
c_{i^*} < E_{i^*} \cdot \beta \cdot q_{i^*}.
\]
In other words, the agent $i^*$ is not selected with enough value from her equilibrium strategy. Our goal is to show that in this case, the agent $i^*$ can deviate in a way that will improve her payoff if the social welfare is too low, which contradicts the equilibrium condition. 

Fix some constant $\phi > 1$ and consider the top $\phi_{i^*} = \min\{\phi \cdot q_{i^*},\ 1\}$ quantile of $i^*$'s value distribution. Let $\bm{\Pi}_{-i^*} = \bigotimes_{j \neq i^*} \Pi_j$ denote the distribution of signals from all agents except $i^*$. For simplicity, for any signal profile $\bsi$, we write $\rho_{i^*}(\bsi)$ as $\rho_{i^*}(\sigma_{i^*}, \bsi_{-i^*})$ where $\bsi_{-i^*}$ is the signal profile of all agents except $i^*$. For each signal $\sigma_{i^*} \in \Sigma_{i^*}$, define
\[
w_{i^*}(\sigma_{i^*}) = \int_{\bigotimes_{j \neq i^*} \Sigma_j}  \rho_{i^*}(\sigma_{i^*}, \bsi_{-i^*}) \d \bm{\Pi}_{-i^*}(\bsi_{-i^*}).
\]
This represents the overall probability that agent $i^*$ is selected when she sends the signal $\sigma_{i^*}$.

\subsubsection{Construction of the deviation signal \texorpdfstring{$s^*$}{s*}}
This step is the key to the analysis. Choose a constant $\tau \in (0, 1)$. We partition the signals of $\Sigma_{i^*}$ into two sets: 
\[
S_1 = \{\sigma \in \Sigma_{i^*} : w_{i^*}(\sigma) \ge \tau\}, \text{ and } S_2 = \{\sigma \in \Sigma_{i^*} : w_{i^*}(\sigma) < \tau\}. 
\]
Here, $S_1$ consists of signals that yield a high probability (at least $\tau$) of agent $i^*$ being selected, while $S_2$ contains signals with low probabilities of being selected.\footnote{Since combining signals with the same posterior mean does not affect agents' utilities or the social welfare, we can combine all signals with the same posterior mean and index agent $i^*$'s signals by their posterior means. Consequently, we can assume that the signal space for agent $i^*$ is $\Sigma_{i^*} = [0, 1]$ and that $e_{i^*}(\sigma_{i^*}) = \sigma_{i^*}$ for all $\sigma_{i^*} \in [0, 1]$. Moreover, since $w_{i^*}(\sigma)$ is a monotonically increasing function of $\sigma$, both sets $S_1$ and $S_2$ are measurable. This guarantees the valid construction of $\pi_{i^*}'(S \mid v)$ below.} 

We construct a new signaling scheme $\nz_{i^*}' = \{\Sigma_{i^*} \cup \{s^*\}, \pi_{i^*}'\}$ of agent $i^*$. The signaling scheme adds a new signal $s^*$ to the original signal space $\Sigma_{i^*}$. The new signal $s^*$ is formed by merging all the value mass in the top $\phi_{i^*}$ quantile that were originally mapped to $S_2$. Formally, for any measurable set $S \subseteq \Sigma_{i^*}'$ and for each $v \in V$, $\pi_{i^*}(S \mid v)$ is given by:\footnote{For simplicity, we omit the case when the $\phi_{i^*}$-quantile splits a mass point of $\nf_{i^*}$. In that case, we can split the mass point and treat the two parts of masses as two different values (the higher quantile part may map to $s^*$ while the lower quantile part will not map to $s^*$).}
\[
\pi_{i^*}'(S \mid v) = \begin{cases}
    \pi_{i^*}(S \cap S_1 \mid v) + \pi_{i^*}(S_2 \mid v) \cdot \mathbb{1} [s^* \in S] & \text{if $v$ is in the top $\phi_{i^*}$-quantile of $\nf_{i^*}$,}\\
    \pi_{i^*}(S \mid v) & \text{otherwise.}
\end{cases}
\]

The intuition of our construction is that we are ``recombining'' all the signals in $S_2$ that come from the high-value region into one single signal $s^*$. This new signal is intended to improve agent $i^*$'s chances of being selected on higher values by eliminating the low probabilities of being selected that were associated with signals in $S_2$. 

\begin{figure}[tbp]
    \centering
    \includegraphics[width=0.9\linewidth]{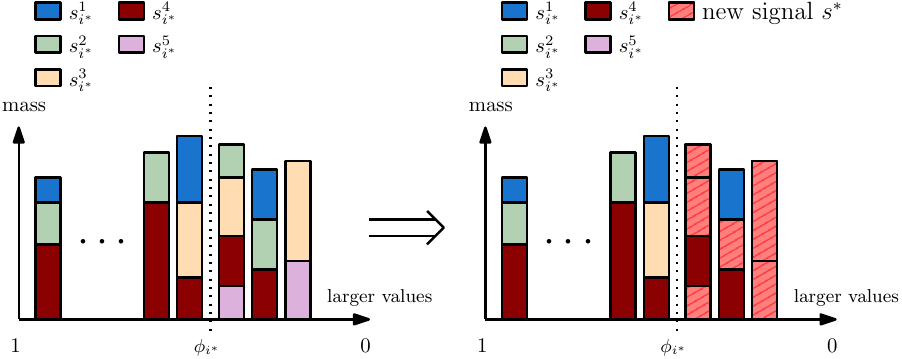}
    \caption{\em Illustration of construction of deviation signaling scheme with signal $s^*$.}
    \label{fig:recombination}
\end{figure}

\cref{fig:recombination} illustrates the process of the construction of $s^*$. The $x$-axis represents the values while the $y$-axis (heights of the bars) represents the mass on the corresponding values. Since each signal consists of masses on different values, we use the same color set of rectangles to represent a signal.  In the illustration, we assume $\Sigma_{i^*}$ consists of five signals $\{s_{i^*}^j\}_{j = 1}^5$. Assume that if signal $s_{i^*}^1$ or $s_{i^*}^4$ is sent, agent $i^*$ has at least $\tau$ probability of being selected. Then $s_{i^*}^1, s_{i^*}^4 \in S_1$. On the other hand, $s_{i^*}^2$, $s_{i^*}^3$, or $s_{i^*}^5$ belong to $S_2$. We will pick all the masses from these signals with values in the top $\phi_{i^*}$ quantile, and combine them into a signal $s^*$, as shown in the tiled area on the right.

\subsubsection{The posterior mean of \texorpdfstring{$s^*$}{s*}}
Let $e_{i^*}(\sigma)$ denote the posterior mean of signal $\sigma$ under agent $i^*$'s original signaling scheme $\nz_{i^*}$, and let $e_{i^*}'(s^*)$ denote the posterior mean of the new signal $s^*$ under $\nz_{i^*}'$. We will first show that the posterior mean of $s^*$ is at least a constant fraction of $E_{i^*}$. 

\begin{lemma}
\label{lem:ei}
    $e_{i^*}'(s^*) \ge E_{i^*} \cdot \left( \frac{1}{\phi} - \frac{\beta}{\tau \cdot \phi} \right)$.
\end{lemma}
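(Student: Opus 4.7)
The plan is to write $e_{i^*}'(s^*)$ as a ratio (total expected value routed to $s^*$) divided by (total probability mass routed to $s^*$), then upper bound the denominator and lower bound the numerator separately.

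First I would handle the denominator. By construction, $s^*$ receives mass only from the top $\phi_{i^*}$-quantile of $\nf_{i^*}$, and in fact only the portion of that mass that the original scheme $\pi_{i^*}$ sent to signals in $S_2$. Hence the total probability of $s^*$ is at most $\phi_{i^*} \le \phi \cdot q_{i^*}$.

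Next I would lower bound the numerator. Let $T = \int_{1-\phi_{i^*}}^{1} v_{i^*}(q)\,\d q$ denote the total expected value in the top $\phi_{i^*}$-quantile and let $T_1$ denote the part of that value that the original scheme $\pi_{i^*}$ mapped into signals of $S_1$. Then the numerator equals $T - T_1$. Since $\phi_{i^*} \ge q_{i^*}$ (recall $\phi > 1$) and the top $q_{i^*}$-quantile has conditional mean exactly $E_{i^*}$, we have $T \ge E_{i^*} \cdot q_{i^*}$. For $T_1$, the key observation is that $T_1$ is bounded above by the total expected value carried by all signals in $S_1$ (dropping the restriction to the top $\phi_{i^*}$-quantile only makes this larger):
\[
T_1 \le V(S_1) := \int_{S_1} e_{i^*}(\sigma) \,\d \Pi_{i^*}(\sigma).
\]

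The main substantive step is bounding $V(S_1)$ using the small-contributor hypothesis. By definition of $S_1$, every $\sigma \in S_1$ satisfies $w_{i^*}(\sigma) \ge \tau$, so
\[
c_{i^*}(\bnz) = \int_{\Sigma_{i^*}} w_{i^*}(\sigma)\, e_{i^*}(\sigma)\, \d \Pi_{i^*}(\sigma) \ge \tau \cdot V(S_1).
\]
Combined with the Case 2 hypothesis $c_{i^*}(\bnz) < E_{i^*} \cdot \beta \cdot q_{i^*}$, this gives $V(S_1) < E_{i^*} \cdot \beta \cdot q_{i^*} / \tau$, and hence $T_1 < E_{i^*} \cdot \beta \cdot q_{i^*} / \tau$. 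Plugging everything in,
\[
e_{i^*}'(s^*) = \frac{T - T_1}{\text{mass to } s^*} \ge \frac{E_{i^*} \cdot q_{i^*} - E_{i^*} \cdot \beta \cdot q_{i^*}/\tau}{\phi \cdot q_{i^*}} = E_{i^*} \cdot \left(\frac{1}{\phi} - \frac{\beta}{\tau \cdot \phi}\right),
\]
which is the claim. The only mild subtlety is making sure the $T_1 \le V(S_1)$ step is clean—this only needs that the value routed to $S_1$ from the top quantile is dominated by the total value routed to $S_1$, which is immediate from non-negativity of $v_{i^*}$. The rest is bookkeeping around the definitions of $q_{i^*}$, $E_{i^*}$, and the partition $S_1 \sqcup S_2$.
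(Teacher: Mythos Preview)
Your proof is correct and follows essentially the same route as the paper: bound the mass of $s^*$ by $\phi\cdot q_{i^*}$, use the small-contributor hypothesis together with $w_{i^*}(\sigma)\ge\tau$ on $S_1$ to bound $V(S_1)$, and then use that $s^*$ and $S_1$ together absorb all the value in the top $\phi_{i^*}$-quantile. The only cosmetic difference is that you name the intermediate quantities $T$ and $T_1$ explicitly and then bound $T_1\le V(S_1)$, whereas the paper writes the combined inequality $V(S_1)+e_{i^*}'(s^*)\cdot\Pr[s^*\text{ sent}]\ge E_{i^*}\cdot q_{i^*}$ directly; the content is identical.
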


\begin{proof}
Since $c_{i^*} < E_{i^*}\cdot \beta \cdot q_{i^*}$ (by the initial assumption of this case) and the value contribution comes from signals in $S_1$ or signals in $S_2$, by considering only signals in $S_1$, we have 
\[
\int_{\sigma \in S_1} e_{i^*}(\sigma) \cdot w_{i^*}(\sigma) \d\, \Pi_{i^*}(\sigma) < E_{i^*}\cdot \beta \cdot q_{i^*}.
\]
By definition of $S_1$, for all $\sigma \in S_1$, $w_{i^*}(\sigma) \ge \tau$. We have 
\begin{equation}
    \int_{\sigma \in S_1} e_{i^*}(\sigma) \d\, \Pi_{i^*}(\sigma) < \frac{E_{i^*}\cdot \beta \cdot q_{i^*}}{\tau}. \label{eqn:contribution_ub}
\end{equation}

Since $s^*$ and the signals in $S_1$ must have ``used up'' all the masses in the top $\phi_{i^*}$ quantile of $\nf_{i^*}$, we have 
\begin{equation}
    \int_{\sigma \in S_1} e_{i^*}(\sigma) \d\, \Pi_{i^*}(\sigma) + e_{i^*}'(s^*) \cdot \Pr[\textrm{$s^*$ is sent}] \ge E_{i^*} \cdot q_{i^*}.  \label{eqn:mass_lb}  
\end{equation}

Since $s^*$ is sent only when agent $i^*$'s value is in the top $\phi_{i^*}$ quantile, it holds that $0 < \Pr[\textrm{$s^*$ is sent}] \le \phi \cdot q_{i^*}$. Combining  with \cref{eqn:contribution_ub,eqn:mass_lb}, we have 
\[
e_{i^*}'(s^*) \ge \frac{E_{i^*} \cdot q_{i^*} - \frac{E_{i^*} \cdot \beta \cdot q_{i^*}}{\tau}}{\phi \cdot q_{i^*}} \ge E_{i^*}\cdot \left(\frac{1}{\phi} - \frac{\beta}{\tau \cdot \phi} \right),
\]
completing the proof.   
\end{proof}

\subsubsection{Deviation to \texorpdfstring{$s^*$}{s*} in new signaling scheme} \label{lem:deviation_wp_lb}
Denote by $\bnz'$ the signaling scheme profile where only agent $i^*$ deviates to the new signaling scheme $\nz_{i^*}'$ while all other agents keep their signaling scheme in $\bnz$. 
\begin{lemma} \label{lem:deviate}
Assuming $\phi \cdot \tau > \alpha,$  in the new signaling scheme profile $\bnz'$, we have 
\[
\Pr[i^* \mbox { is selected} \mid s^* \mbox{ is sent}] < \tau.
\] 
\end{lemma}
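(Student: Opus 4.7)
The plan is to argue by contradiction: I will suppose $w'_{i^*}(s^*) := \Pr[i^* \text{ is selected} \mid s^* \text{ is sent}] \geq \tau$ and derive a contradiction with the hypothesis $\phi \tau > \alpha$, using the Nash equilibrium property of $\bnz$.

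First, I will lower-bound $P := \Pr[s^* \text{ is sent}]$. Since every $\sigma \in S_1$ satisfies $w_{i^*}(\sigma) \geq \tau$, Markov's inequality applied to $r_{i^*}(\bnz) = \int w_{i^*}(e)\,\d \Pi_{i^*}$, together with the hypothesis $i^* \in \nn_2$ (so $r_{i^*}(\bnz) < \alpha q_{i^*}$), yields $\Pi_{i^*}(S_1) \leq r_{i^*}(\bnz)/\tau < \alpha q_{i^*}/\tau$. Since the mass of the top-$\phi_{i^*}$ quantile of $\nf_{i^*}$ that the original scheme can route to $S_1$ is at most $\Pi_{i^*}(S_1)$, the remainder must be absorbed by $s^*$, so
\[
P \;\geq\; \phi_{i^*} - \Pi_{i^*}(S_1) \;>\; \phi q_{i^*} - \frac{\alpha q_{i^*}}{\tau} \;=\; q_{i^*}\!\left(\phi - \tfrac{\alpha}{\tau}\right) \;>\; 0,
\]
where I assume the interior case $\phi q_{i^*} \leq 1$ (the edge case $\phi q_{i^*} > 1$ is analogous, since then $\phi_{i^*} = 1$).

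Next, I will exploit the fact that other agents' strategies are fixed in passing from $\bnz$ to $\bnz'$, so the map $e \mapsto w_{i^*}(e)$ is identical under both profiles. Under the contradiction assumption $w'_{i^*}(s^*) \geq \tau$, the new signal $s^*$ joins the ``high-selection'' regime previously occupied only by $S_1$, so the total mass of signals under $\nz_{i^*}'$ with selection probability at least $\tau$ is $\Pi_{i^*}(S_1) + P \geq \phi_{i^*}$. Integrating gives
\[
r_{i^*}(\bnz') \;\geq\; \tau\bigl(\Pi_{i^*}(S_1) + P\bigr) \;\geq\; \tau \phi q_{i^*}.
\]

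Finally, I will close the argument using the Nash condition on $i^*$. In the constant-utility setting of \cref{sec:warmup} this is immediate: Nash forces $r_{i^*}(\bnz) \geq r_{i^*}(\bnz')$, and together with $r_{i^*}(\bnz) < \alpha q_{i^*}$ this gives $\alpha q_{i^*} > \tau \phi q_{i^*}$, i.e.\ $\phi\tau < \alpha$, contradicting the hypothesis. For general monotone $u_{i^*}$, one must promote the selection-probability gap to a utility gap, using the value-mass lower bound $P \cdot e_{i^*}'(s^*) \geq E_{i^*} q_{i^*}(1 - \beta/\tau)$ from \cref{lem:ei} and monotonicity of $u_{i^*}$ to lower-bound the deviation utility, while the Case~2 bound $c_{i^*}(\bnz) < E_{i^*}\beta q_{i^*}$ together with monotonicity bounds the equilibrium utility from above. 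I expect the utility-conversion step to be the main obstacle: because $u_{i^*}$ can be arbitrarily nonlinear, carefully tracking how value mass translates to utility on both sides will be delicate, and the margin between $\phi\tau$ and $\alpha$ must be used sharply.
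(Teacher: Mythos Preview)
Your overall contradiction structure is sound, and in the interior case the aggregate bound $r_{i^*}(\bnz') \ge \tau\phi q_{i^*} > \alpha q_{i^*} > r_{i^*}(\bnz)$ matches the paper exactly; for constant utilities this does finish the argument. The genuine gap is in your proposed utility-conversion step for general monotone $u_{i^*}$. The quantities you plan to use, $c_{i^*}(\bnz)$ and $P\cdot e'_{i^*}(s^*)$, measure \emph{value mass}---integrals of the form $\int \rho_{i^*}(\bsi)\,e_{i^*}(\sigma_{i^*})\,\d\bpi$---and say nothing about agent $i^*$'s utility $\int \rho_{i^*}(\bsi)\,\E[u_{i^*}(v_{i^*})\mid\sigma_{i^*}]\,\d\bpi$ when $u_{i^*}$ is an arbitrary monotone function. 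There is no inequality relating the two in either direction, so this route cannot close; the ``delicate tracking'' you anticipate is not merely delicate but impossible along these lines.

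The paper sidesteps value-to-utility conversion entirely by comparing selection probabilities \emph{value by value}. Condition on a realized $v_{i^*}=v$. If $v$ lies below the top-$\phi_{i^*}$ threshold, then $\pi'_{i^*}(\cdot\mid v)=\pi_{i^*}(\cdot\mid v)$ and the conditional selection probability is unchanged. If $v$ lies above the threshold, the $S_1$-portion of $\pi_{i^*}(\cdot\mid v)$ is untouched while the $S_2$-portion is redirected to $s^*$; under the contradiction hypothesis $s^*$ yields selection probability at least $\tau$, whereas every $\sigma\in S_2$ yields strictly less than $\tau$, so the conditional selection probability weakly increases for each such $v$. Hence the entire gain in total selection probability---at least $(\phi\tau-\alpha)q_{i^*}$---is supported on values at or above the threshold, where $u_{i^*}(v)$ is at least $u_{i^*}$ evaluated at the threshold; monotonicity then gives a utility gain of at least $(\phi\tau-\alpha)\,q_{i^*}\cdot u_{i^*}(v_{i^*}(\phi_{i^*}))>0$, contradicting equilibrium. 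You already observed that the selection map $e\mapsto w_{i^*}(e)$ is fixed across the two profiles and that $s^*$ enters the high-selection regime; the missing step is to exploit this pointwise in $v$ rather than only through the aggregate $r_{i^*}$.
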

\begin{proof}
Assume instead that when signal $s^*$ is sent, agent $i^*$ is selected with probability at least $\tau$. Since $\phi_{i^*} = \min(1, \phi \cdot  q_{i^*})$, we consider the following two cases:

\begin{itemize}
    \item [1.] If $\phi_{i^*} = 1$, then agent $i^*$'s deviation to $s^*$ strictly increases the probability of being selected on all values used to form $s^*$, since their original probability of being selected is less than $\tau$ by definition of $S_2$. Therefore, agent $i^*$ will strictly benefit from the deviation. 
    \item [2.]  If $\phi_{i^*} = \phi \cdot q_{i^*}$, consider the case after agent $i^*$ deviates to $\nz_{i^*}'$. All values in the top $(\phi \cdot q_{i^*})$ quantile are mapped to either signals in $S_1$ or $s^*$. Since when any signal in $S_1$ or $s^*$ is sent to the \principal, agent $i^*$ is selected with probability at least $\tau$, the deviation to $s^*$ makes her overall probability of being selected at least $\phi \cdot q_{i^*} \cdot \tau$. By the construction of $\nz_{i^*}'$, agent $i^*$ loses utility from the cases when her value is at most $v_{i^*}(1 - \phi_{i^*})$ and gains utility from the cases when her value is at least $v_{i^*}(1 - \phi_{i^*})$. 
    Recall that $i^* \in \nn_2$, so $r_{i^*}(\bnz) < \alpha \cdot q_{i^*}$. Since the parameters satisfy $\phi \cdot \tau > \alpha$, agent $i^*$'s overall probability of being selected increases from strictly less than $\alpha \cdot q_{i^*}$ to $\phi \cdot \tau \cdot q_{i^*}$. She will gain at least 
   \[
   (\phi \cdot \tau - \alpha) \cdot q_{i^*} \cdot u_{i^*}(v_{i^*}(1 - \phi_{i^*})) > 0
   \]
   additional utility from the deviation. 
\end{itemize}

Therefore, agent $i^*$ will strictly benefit from the potential strategy deviation from $\nz_{i^*}$ to $\nz_{i^*}'$, which contradicts the fact that $\bnz$ is a Nash equilibrium.
\end{proof}

By the above lemma, we can assume that in $\bnz'$, agent $i^*$ is selected with probability strictly less than $\tau$ when $s^*$ is sent. We now classify all agents into two groups $\nn_\ell$ and $\nn_s$, according to whether an agent $i$ has $E_i$ larger or smaller than $E_{i^*}$:
\[
\nn_\ell = \{i \in [N] : E_i > E_{i^*}\}; \qquad \nn_s = \{i \in [N] : E_i \le E_{i^*}\}.
\]
Since $E_{i^*} \ge E_\cut$, all agents in $\nn_\ell$ must have their $q_i = 1$. Note that $i^* \in \nn_s$. Since $q_{i^*} > 0$ and $\sum_{i \in [N]} q_i = k$, we have $|\nn_\ell| < k$.

Let $M(\bnz)$ denote the random variable indicating the $(k - |\nn_\ell|)$-th largest posterior mean among all agents in $\nn_s$ in the Nash equilibrium $\bnz$. 

\begin{lemma} \label{lem:klargest_win_lb}
If agent $i^*$ is selected with probability less than $\tau$ when $s^*$ is sent in $\bnz'$, we have \[\Pr[M(\bnz) \ge e'_{i^*}(s^*)] \ge 1 - \tau.\] 
\end{lemma}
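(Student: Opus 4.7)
The plan is a contrapositive counting argument. Let $X$ denote the (random) number of agents in $\nn_s \setminus \{i^*\}$ whose posterior mean under $\bnz$ is at least $e_{i^*}(s^*)$. Since only $i^*$ changes strategies between $\bnz$ and $\bnz'$, the distribution of $\bsi_{-i^*}$ is identical in both profiles, so $X$ has the same distribution whether we view it under $\bnz$ or under $\bnz'$ conditioned on $i^*$ sending $s^*$. Moreover $\{X \ge k - |\nn_\ell|\}$ implies $\{M(\bnz) \ge e_{i^*}(s^*)\}$ because $\nn_s \setminus \{i^*\} \subseteq \nn_s$, so it suffices to prove $\Pr[X \ge k - |\nn_\ell|] \ge 1 - \tau$.

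The heart of the argument is the deterministic claim that whenever $X(\bsi_{-i^*}) \le k - |\nn_\ell| - 1$, agent $i^*$ is selected with certainty under $\bnz'$, i.e., $\rho_{i^*}(s^*, \bsi_{-i^*}) = 1$. The reason is that at most $|\nn_\ell|$ agents of $\nn_\ell$ can have posterior mean $\ge e_{i^*}(s^*)$ (trivially), so combined with the assumed bound on $X$, at most $(k - |\nn_\ell| - 1) + |\nn_\ell| = k - 1$ agents in $[N] \setminus \{i^*\}$ have posterior mean $\ge e_{i^*}(s^*)$. This forces $i^*$ either into $\nw(\bsi)$ directly (giving $\rho_{i^*} = 1$), or into $\nt(\bsi)$ with $|\nw(\bsi)| + |\nt(\bsi)| = k$ exactly (combining the definitional lower bound with the counting upper bound), in which case the tie-breaking rule still gives $\rho_{i^*}(s^*, \bsi_{-i^*}) = (k - |\nw(\bsi)|) / |\nt(\bsi)| = 1$.

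Finally, taking expectations over $\bsi_{-i^*} \sim \bm{\Pi}_{-i^*}$, the claim yields
\[
\int \rho_{i^*}(s^*, \bsi_{-i^*}) \d \bm{\Pi}_{-i^*}(\bsi_{-i^*}) \ge \Pr[X \le k - |\nn_\ell| - 1],
\]
and combining with the hypothesis that the left-hand side is strictly less than $\tau$ gives $\Pr[X \ge k - |\nn_\ell|] > 1 - \tau$, as desired. The main subtlety I anticipate is handling the tie-breaking case cleanly, but the previously established strict inequality $|\nn_\ell| < k$ (coming from $\sum_i q_i = kQ$ with $Q \le 1$ and $q_{i^*} > 0$) is exactly what forces $|\nw(\bsi)| + |\nt(\bsi)| = k$ in the tied case and thus makes the arithmetic land on $\rho_{i^*} = 1$.
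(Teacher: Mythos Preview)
Your proof is correct and follows essentially the same approach as the paper: the paper argues by contradiction via the stochastic domination $M_{-i^*}(\bnz) \le M(\bnz)$ (where $M_{-i^*}$ is the $(k-|\nn_\ell|)$-th largest posterior mean in $\nn_s \setminus \{i^*\}$), which is exactly your implication $\{X \ge k - |\nn_\ell|\} \Rightarrow \{M(\bnz) \ge e_{i^*}(s^*)\}$ phrased differently. Your treatment of the tie case is in fact more careful than the paper's, which simply asserts that $M_{-i^*} < e_{i^*}(s^*)$ forces $i^*$ to be selected; one minor remark is that the equality $|\nw|+|\nt|=k$ in the tie case follows from your counting bound alone (at most $k-1$ opponents plus $i^*$ have mean $\ge e_{i^*}(s^*)$, while by definition at least $k$ do), so the inequality $|\nn_\ell|<k$ is really only needed to make $M(\bnz)$ well-defined rather than to close the tie argument.
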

\begin{proof}
    If not, we have $\Pr[M(\bnz) \ge e'_{i^*}(s^*)] < 1 - \tau$ and thus $\Pr[M(\bnz) < e'_{i^*}(s^*)] \ge \tau$. Denote by $M_{-i^*}(\bnz)$ the random variable representing the $(k - |\nn_\ell|)$-th largest posterior mean among all agents except $i^*$ in $\nn_s$ in $\bnz$. $M_{-i^*}(\bnz)$ is stochastically dominated by $M(\bnz)$. When $M_{-i^*}(\bnz) < e'_{i^*}(s^*)$, agent $i^*$ must be selected when $s^*$ is sent. This contradicts the assumption that she is selected with probability less than $\tau$ when $s^*$ is selected. 
\end{proof}

We first present a proof showing that when the top $(1-\tau)$ quantile of $M(\bnz)$ is no less than the posterior mean of $s^*$, the social welfare decreases by at most a constant. 

\begin{lemma} \label{lem:PoA_M_ub}
     Given a strategy profile $\bnz$, if 
     $\Pr\left[M(\bnz) \ge e'_{i^*}(s^*)\right] \ge 1 - \tau$,
     then we have 
     \[
     \SW(\bnz) \ge \SW' \cdot \frac{(1 - \tau) \cdot \left(\frac{1}{\phi} - \frac{\beta}{\tau \cdot \phi} \right)}{2}.
     \]
\end{lemma}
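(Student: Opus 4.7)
The plan is to produce matching lower and upper bounds for $\SW(\bnz)$ and the proxy $\SW'$ in a common algebraic form, then compare them termwise. The key structural observation is that $|\nn_\ell| < k$ (as noted just before \cref{lem:N2}, using $\sum_i q_i = Q k \le k$ together with $q_{i^*} > 0$), so for every realization of the signal profile I may consider the fixed size-$k$ subset $T := \nn_\ell \cup T_s(\bsi)$, where $T_s(\bsi) \subseteq \nn_s$ is a realization-dependent choice of $k - |\nn_\ell|$ agents in $\nn_s$ with the highest posterior means. Since the principal always selects the $k$ largest posterior means, summing over $T$ gives a realization-wise lower bound on the welfare, so
\[
\SW(\bnz) \;\ge\; \sum_{i \in \nn_\ell} \E[e_i(\sigma_i)] \;+\; \E\bigl[\text{sum of the top }(k-|\nn_\ell|)\text{ posterior means among }\nn_s\bigr].
\]

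For the first summand I would invoke Bayes plausibility, $\E[e_i(\sigma_i)] = \E[v_i]$, together with the fact that every $i \in \nn_\ell$ has $q_i = 1$, making $\E[v_i] = E_i$. For the second summand the hypothesis yields $\Pr[M(\bnz) \ge e_{i^*}(s^*)] \ge 1 - \tau$, and on that event each of the top $k - |\nn_\ell|$ posterior means in $\nn_s$ exceeds $e_{i^*}(s^*)$; combined with \cref{lem:ei}, this summand is at least $(1-\tau)(k-|\nn_\ell|)\bigl(\tfrac{1}{\phi} - \tfrac{\beta}{\tau\phi}\bigr) E_{i^*}$. Writing $\lambda := (1-\tau)\bigl(\tfrac{1}{\phi} - \tfrac{\beta}{\tau\phi}\bigr)$ and $S := \sum_{i \in \nn_\ell} E_i$, I therefore obtain the clean lower bound $\SW(\bnz) \ge \lambda(k-|\nn_\ell|) E_{i^*} + S$.

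On the other side, I would decompose $\SW' \le E_\cut\, k + \sum_i E_i q_i$ according to $\nn_\ell$ vs.\ $\nn_s$. On $\nn_\ell$ we have $q_i = 1$, so $\sum_{i \in \nn_\ell} E_i q_i = S$; on $\nn_s$ we have $E_i \le E_{i^*}$ and $\sum_{i \in \nn_s} q_i = Qk - |\nn_\ell|$; and $E_\cut \le E_{i^*}$ by definition of the line sweep. This yields the matching upper bound $\SW' \le \bigl((1+Q)k - |\nn_\ell|\bigr) E_{i^*} + S$, so the claim reduces to
\[
\frac{\lambda(k-|\nn_\ell|)E_{i^*} + S}{\bigl((1+Q)k - |\nn_\ell|\bigr)E_{i^*} + S} \;\ge\; \frac{\lambda}{1+Q}.
\]

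Since $\lambda < 1/\phi < 1$ (using $\phi \tau \ge \alpha > 1$), the coefficient of $E_{i^*}$ in the numerator is strictly smaller than in the denominator, so the left-hand side is monotonically increasing in $S$. It therefore suffices to verify the inequality at the smallest admissible value $S = |\nn_\ell| E_{i^*}$, which is valid because $E_i > E_{i^*}$ for every $i \in \nn_\ell$ by definition. At that extreme the inequality collapses to $\lambda(k-|\nn_\ell|) + |\nn_\ell| \ge \lambda k$, i.e., $(1 - \lambda)|\nn_\ell| \ge 0$, which is immediate. I expect the main obstacle to lie in the initial realization-wise decomposition of $\SW(\bnz)$ via the fixed subset $T$, and in recognizing that Bayes plausibility provides exactly the right handle on the $\nn_\ell$ contribution; once these two ingredients are in place, the remaining steps are routine algebraic bookkeeping.
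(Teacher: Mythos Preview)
Your proof is correct and follows the same overall architecture as the paper: both introduce the ``fake'' selection $T = \nn_\ell \cup \{\text{top }(k-|\nn_\ell|)\text{ of }\nn_s\}$, use Bayes plausibility to evaluate the $\nn_\ell$ contribution as $\sum_{i\in\nn_\ell} E_i$, and bound the $\nn_s$ contribution via the hypothesis together with \cref{lem:ei}. The only substantive divergence is in the final comparison with $\SW'$. The paper rewrites the lower bound as a weighted average and invokes a mediant-type inequality (the ``$b \ge a_\ell \ge a_s$, $w_\ell \ge w_s$'' lemma) to pass to $\SW'/((1+Q)k)$. You instead upper-bound $\SW'$ directly by $((1+Q)k - |\nn_\ell|)E_{i^*} + S$ and finish with a clean monotonicity-in-$S$ argument, reducing to $(1-\lambda)|\nn_\ell|\ge 0$. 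Your route is more elementary and transparent; the paper's route has the minor advantage of never explicitly upper-bounding $\SW'$ (it works with the exact expression), but at the cost of a somewhat opaque auxiliary inequality. Both rely on the same two facts, $\lambda<1$ and $S \ge |\nn_\ell|E_{i^*}$, so neither is strictly stronger.
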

\begin{proof}
    Consider a hypothetical scenario where, instead of selecting the $k$ agents with the highest posterior means, the \principal selects all agents in $\nn_\ell$ and selects the remaining $k - |\nn_\ell|$ agents from $\nn_s$ with the highest posterior means. Denote the resulting social welfare in this scenario by $\SW^{\textrm{(fake)}}$. We have
    \begin{align}
        \SW^{\textrm{(fake)}}(\bnz) &\ge  (k - |\nn_\ell| ) \cdot \E[M(\bnz)] + \sum_{i \in \nn_\ell} E_{i} \notag\\
        & \geq (1-\tau) \cdot (k - |\nn_\ell| ) \cdot  e'_{i^*}(s^*) +  \sum_{i \in \nn_\ell} E_{i}  \tag{w.p. $1-\tau$, $M(\bnz) \ge e'_{i^*}(s^*)$. }\\
        & \ge (1 - \tau) \cdot (k - |\nn_\ell|) \cdot \left(\frac{1}{\phi} - \frac{\beta}{\tau \cdot \phi} \right) \cdot E_{i^*} + \sum_{i \in \nn_\ell} E_{i}  \tag{\cref{lem:ei}}\\
        & \ge \frac{(1 - \tau) \cdot (k - |\nn_\ell|) \cdot \left(\frac{1}{\phi} - \frac{\beta}{\tau \cdot \phi} \right) \cdot E_{i^*} + \sum_{i \in \nn_\ell} E_{i} }{(1 - \tau) \cdot \left(\frac{1}{\phi} - \frac{\beta}{\tau \cdot \phi} \right) \cdot (k - |\nn_\ell|) + |\nn_\ell|}\cdot (1 - \tau) \cdot \left(\frac{1}{\phi} - \frac{\beta}{\tau \cdot \phi} \right) \cdot k. \label{eqn:fake_lb_1}
    \end{align}
   To see the last inequality, note that  $\alpha > 1$. Since $\tau < 1$, we have $\phi \ge \frac{\alpha}{\tau} > 1$. Since $\beta \le \tau$, we finally have 
   $$(1 - \tau) \cdot \left(\frac{1}{\phi} - \frac{\beta}{\tau \cdot \phi} \right)  \le (1 - \tau) < 1.$$  
Hence, \[(1 - \tau) \cdot \left(\frac{1}{\phi} - \frac{\beta}{\tau \cdot \phi} \right) \cdot (k - |\nn_\ell|) + |\nn_\ell| \ge (1 - \tau) \cdot \left(\frac{1}{\phi} - \frac{\beta}{\tau \cdot \phi} \right) \cdot k.
    \]
    
    Since $E_\cut \le E_i \le E_{i^*} \le E_j$, for any $i \in \nn_s$, and $j \in \nn_\ell$, it follows that any convex combination of $E_{\cut}$ and $\{E_{i}\}_{i \in \nn_s}$ cannot exceed $E_{i^*}$, while $E_{i^*}$ cannot exceed any convex combination of $\{E_{j}\}_{j \in \nn_\ell}$. Therefore, 
    \[
    \frac{k \cdot E_\cut + \sum_{i \in \nn_s} E_i \cdot q_i}{k + \sum_{i \in \nn_s} q_i} \le E_{i^*} \le \frac{\sum_{i \in \nn_{\ell}} E_i}{|\nn_\ell|}.
    \]

    For any real numbers $b \ge a_\ell \ge a_s \ge 0, w_\ell \ge w_s, w_b \ge 0$, we have 
    \[
        \frac{a_\ell \cdot w_s + b \cdot w_b}{w_s + w_b} \ge \frac{a_s \cdot w_\ell + b \cdot w_b}{w_\ell + w_b}.
    \]
    By letting $b = \frac{\sum_{i \in \nn_{\ell}} E_i}{|\nn_\ell|}$, $w_b = |\nn_\ell|$, $a_\ell = E_{i^*}$, $w_s = (1 - \tau) \cdot \left(\frac{1}{\phi} - \frac{\beta}{\tau \cdot \phi} \right) \cdot (k - |\nn_\ell|)$, $a_s = \frac{k \cdot E_\cut + \sum_{i \in \nn_s} E_i \cdot q_i}{k + \sum_{i \in \nn_s} q_i}$, $w_\ell = k + \sum_{i \in \nn_s} q_i$, it follows from \cref{eqn:fake_lb_1} that
    \begin{align*}
        \SW^{\textrm{(fake)}}(\bnz) & \ge \frac{k \cdot E_\cut + \sum_{i \in \nn_s} E_i \cdot q_i + \sum_{i \in \nn_\ell} E_i}{k + \sum_{i \in \nn_s} q_i + |\nn_\ell|} \cdot (1 - \tau) \cdot \left(\frac{1}{\phi} - \frac{\beta}{\tau \cdot \phi} \right) \cdot k \\
        & = \frac{\SW'}{2 k} \cdot (1 - \tau) \cdot \left(\frac{1}{\phi} - \frac{\beta}{\tau \cdot \phi} \right) \cdot k. \tag{Since $q_i = 1$ for any $i \in \nn_\ell$}
    \end{align*}
    
    Since the \principal may not be a posterior mean maximizer, we have $\SW^{\textrm{(fake)}}(\bnz) \le \SW(\bnz)$. This completes the proof. 
\end{proof} 

Since $\SW'$ is an upper bound on social welfare in the full-revelation scenario, and since selection in the hypothetical scenario leads to at most the social welfare of selecting the $k$ agents with highest posterior mean, the PoA in this case is at most $\frac{2}{(1 - \tau) \cdot \left(\frac{1}{\phi} - \frac{\beta}{\tau \cdot \phi} \right)}$.

\begin{proof} [Completing the proof of \cref{thm:main}]
Since either Case 1 or Case 2 happens, the PoA is at most 
\begin{gather*}
    \mbox{PoA} \le \max \left\{\frac{2}{\beta \cdot (1 - 1 / \alpha)} , \frac{2}{(1 - \tau) \cdot \left(\frac{1}{\phi} - \frac{\beta}{\tau \cdot \phi} \right)}\right\}.
\end{gather*}
By setting\footnote{The current parameters are set with $\phi \cdot \tau = \alpha$, we can jitter the parameters a bit to achieve $\phi \cdot \tau > \alpha$ as required. So the PoA bound can be arbitrarily close to $11+5\sqrt{5}$.}
\begin{gather*}
    \alpha = \frac{\sqrt{5} + 1}{2} \approx 1.618, \qquad \beta = \sqrt{5}-2 \approx 0.236, \\
    \tau = \frac{\sqrt{5} - 1}{2} \approx 0.618, \qquad \phi = \frac{\sqrt{5} + 3}{2} \approx 2.618,
\end{gather*}
we know that the PoA of any instance is at most $11 + 5\sqrt{5} \approx 22.181$. This completes the proof of \cref{thm:main}.
\end{proof}

\section{Open Questions}
There are several avenues for future research. The immediate question is to improve the constant factor in \cref{thm:main}; this will require new ideas. %
Further, it would be interesting to extend our results to the setting where the principal solves an arbitrary combinatorial optimization problem on the agents, and when agent values can be multi-dimensional. As an example, each agent can control a subset of edges in a graph, whose cost they seek to signal to the principal, while the principal seeks to compute a minimum spanning tree on these edges. Another example is the price discrimination setting in the work of~\cite*{bergemann2015limits}. Finally, it would be interesting to explore PoA for other models of information revelation beyond signaling, for instance, models where the agents themselves are uncertain about their values~\citep*{RoeslerS}.

\newpage
\bibliographystyle{plainnat}
\bibliography{refs}

\newpage
\phantomsection
\addcontentsline{toc}{section}{Appendices}
\appendix
\section{PoA of a Discretized Game} \label{app:discretized}
In this section, we present a natural discretized version of the information disclosure game where the Nash Equilibrium always exists and our PoA bound still holds.

\subsection{A Restricted Signaling Model}

Let there be $N$ agents $\{1, 2, \ldots, N\}$. Let $V = \{v_{(1)}, v_{(2)}, \ldots, v_{(M)}\}$ be a finite set of possible values of all agents. The prior distribution of $i$'s value, denoted by $\nf_i$, is a discrete distribution over these values. Same as the main text, we assume that $\nf_i$'s are independent. Fix a small number $\epsilon > 0$. We assume that in each $\nf_i$, the probability mass on each value is a multiple of $\epsilon$. Formally, $\Pr[v_i = v_{(j)}] = t_{iv} \cdot \epsilon$, where $t_{iv} \in \mathbb{Z}^{\ge 0}$ is a non-negative integer. 

Let $\nz_i = (\Sigma_i, \pi_i)$ be the signaling scheme of agent $i$. In the probabilistic mapping $\pi_i$, we further assume that for each signal $\sigma_i \in \Sigma_i$, the conditional probability $\Pr[\sigma_i \textrm{ is sent} \mid v_i = v_{(j)}]$ is a multiple of $\frac{1}{t_{iv}}$ (if $t_{iv} \ge 1$). In other words, the signaling scheme is also discretized in the sense that each signal $\sigma_i$ is formed by contraction from $\epsilon$-multiple masses on each value. 

Since the number of signaling schemes is finite in this setting, the NE must exist. Note that in the NE, each agent may mix different signaling schemes. We call this the \emph{$\epsilon$-discretized Bayesian persuasion game}, and will show that in this model, the PoA is still a constant.

\subsection{Main Result for Restricted Signaling Spaces}
Similar to \cref{thm:main}, we will prove the following result.
\begin{theorem}
\label{rebuttal:main}
The PoA of the $\epsilon$-discretized Bayesian persuasion game with $k$ selection (for any $k \ge 1$) is at most $22.459 / (1 - 2.3766  \cdot \epsilon N)$, which is close to $22.459$ when $\epsilon \ll 1/N$.
\end{theorem}

The proof idea is almost identical to \cref{sec:general} of our paper. In this section, we specify the places where the proof steps are different for the discretized game. 

\paragraph{Change in the top quantile cuts.} In \cref{sec:quantile_cuts}, we additionally let $\hat{q}_i = \left\lfloor \frac{q_i}{\epsilon} \right\rfloor \cdot \epsilon$ be the ``discretized'' quantile cut, and we let $\hat{E}_i$ be the mean of the top $\hat{q}_i$ quantile for agent $i$. Note that the truncation will only take effect on those agents with $E_i = E_{\cut}$ (otherwise their $q_i = 1$ and thus $q_i = \hat{q}_i$). Moreover, we have $\hat{q}_i \ge q_i - \epsilon$ for each agent, and the truncated part has mean at most $E_{\cut}$.

Now we modify the grouping of agents so that 
\[\nn_1 = \{i \in [N]: r_i(\bnz) \ge \alpha \cdot \hat{q}_i\}, \quad \nn_2 = \{i \in [N]: r_i(\bnz) < \alpha \cdot \hat{q}_i\}.\] \cref{lem:swub} will remain the same, so that $\SW'$ is still upper bounded by $E_{\cut} \cdot k + \sum_{i \in [N]} E_i \cdot q_i$. \cref{lem:large_contributors} will be modified to the following:
\begin{lemma} [Modified version of \cref{lem:large_contributors}]
    Given a strategy profile $\bnz$, if all agents in $\nn_2$ satisfy $c_i(\bnz) \ge \hat{E}_i \cdot \beta \cdot \hat{q}_i$, then we have 
    \[
        \SW(\bnz) \ge \beta \cdot \SW' \cdot \frac{1 - 1 / \alpha - \epsilon \cdot N}{2}.
    \]
\end{lemma}

Since after the truncation process, the expected values in the discretized quantile cuts $\hat{q}_i$'s can be different, we can not directly argue that $\min_{i\in \nn_2} \hat{E}_i \ge \max_{i \in \nn_1} \hat{E}_i$ (as stated in \cref{lem:N2}). However, we can use the original $\sum_{i\in \nn_1}{q_i} \cdot E_{\cut}$ as a bridge to prove the modified version of \cref{lem:large_contributors}.

\begin{proof} [Proof of modified version of \cref{lem:large_contributors}]
    If $E_i > E_{\cut}$ for some agent $i$, both $q_i$ and $q_{i^*}$ will be $1$. Since $\alpha$ is a parameter strictly larger than $1$, $i$ must lie in $\nn_2$. Therefore, for any agent $i'$ in $\nn_1$, we have $E_{i'} \le E_{\cut}$. Moreover, since $\hat{q}_i$ is truncated from $q_i$. We have 
    \[
    \hat{q}_i \cdot \hat{E}_i \le q_i \cdot E_i \le q_i \cdot E_{\cut}.
    \]

    Furthermore, we have
    \begin{align*}
    \SW(\bnz) & = \sum_{i \in [N]} c_i \ge \sum_{i \in \nn_2} c_i
    \ge \beta \cdot \sum_{i \in \nn_2} \hat{E}_i \cdot \hat{q}_i \\
    & \ge \beta \cdot \left( \sum_{i \in \nn_2} E_i \cdot q_i  - \epsilon \cdot |\nn_2| \cdot E_{\cut} \right) \\
    & \ge \beta \cdot \frac{\sum_{i \in [N]} q_i \cdot E_i + k \cdot E_\cut}{\sum_{i \in [N]} q_i + k} \cdot \left(\sum_{i \in \nn_2} q_i - \epsilon \cdot |N_2| \right)\tag{Since for all $i \in \nn_2$, $E_i \ge E_\cut$} \\
    & \ge \beta \cdot \SW' \cdot \frac{1 - 1 / \alpha - \epsilon \cdot N}{2}. \tag{Since $\sum_{i \in \nn_2} q_i \ge (1 - 1/\alpha) \cdot k$}
    \end{align*}
    This completes the proof.
\end{proof}    

\paragraph{Change in the construction of $s^*$.} 
Similar to Section 5.3.1, we consider the case when there is an agent $i^*$ with small contribution to welfare so that $c_{i^*} < \hat{E}_{i^*} \cdot \beta \cdot \hat{q}_{i^*}$. We partition all the signals sent by $i^*$ into two parts $S_1$ and $S_2$, where $S_2$ consists of all signals with a winning probability strictly less than $\tau$. Next, we combine all the masses on the top $\phi \cdot \hat{q}_{i^*}$ (rather than the original $\phi \cdot q_{i^*}$) quantile of $i^*$ that are used in signals in $S_2$ into a single signal $s^*$. %

Additionally, we restrict the constant parameter $\phi$ to be an integer, so that the quantile cut $\phi \cdot \hat{q}_{i^*}$ is still a multiple of $\epsilon$ and the deviation to $s^*$ is valid. \cref{lem:ei} will become: 
$$e_{i^*}'(s^*) \ge \hat{E}_{i^*} \cdot \left(\frac{1}{\phi} - \frac{\beta}{\tau \cdot \phi}\right).$$ 
\cref{lem:deviate,lem:klargest_win_lb,lem:PoA_M_ub} will remain the same. Note that when proving \cref{lem:PoA_M_ub}, we will use the original set $\nn_s$ and $\nn_\ell$. The only change in the proof is an additional usage of $\hat{E}_{i^*} \ge E_{i^*}$. In the first set of inequalities in the proof of \cref{lem:PoA_M_ub}, we add
\begin{align*}
(1-\tau) \cdot (k - |\nn_\ell| ) \cdot  e'_{i^*}(s^*) +  \sum_{i \in \nn_\ell} E_{i} &\ge (1 - \tau) \cdot (k - |\nn_\ell|) \cdot \left(\frac{1}{\phi} - \frac{\beta}{\tau \cdot \phi} \right) \cdot \hat{E}_{i^*} + \sum_{i \in \nn_\ell} E_{i} \tag{Since $e_{i^*}'(s^*) \ge \hat{E}_{i^*} \cdot \left(\frac{1}{\phi} - \frac{\beta}{\tau \cdot \phi}\right)$}\\
&\ge (1 - \tau) \cdot (k - |\nn_\ell|) \cdot \left(\frac{1}{\phi} - \frac{\beta}{\tau \cdot \phi} \right) \cdot E_{i^*} + \sum_{i \in \nn_\ell} E_{i} \tag{Since $\hat{E}_{i^*}\ge E_{i^*}$}
\end{align*}
and the inequality set can go through from the second line to the third line.  

Assume $\epsilon$ is small enough. By setting 
\begin{gather*}
    \alpha = 1.7264, \qquad \beta = 0.21164, \qquad \tau = 0.57883, \qquad \phi = 3,
\end{gather*}
the PoA of such discretized setting is at most $22.459 /(1 - 2.3766  \cdot \epsilon N)$. The slight increment on the number comes from restricting the parameter $\phi$ to be integral. 

\section{PoA of a Noisy Selection Game} \label{app:noisy}
We consider a game with $N$ agents where each agent has a private value $v_i$ drawn from a prior distribution $\nf_i$ on $I = [\underline{v}, 1]$ with $\underline{v} > 0$.\footnote{This lower bound on the value prevents the agent from putting a point mass on value $0$ and ensures the smoothing effect of the multiplicative noise.} Each agent chooses a signaling scheme, which we model it as a probability measure on value-message pairs. The principal observes noisy versions of the agents' posterior means and selects the top $k$ agents. In this section, we prove the existence of a Nash equilibrium using Glicksberg's theorem, and apply our PoA result to this setting.

\subsection{Model and Preliminaries}
\subsubsection{Signaling scheme re-formalization}
Each agent $i$ chooses a \textbf{signaling scheme}. In this section, we formulate the signaling scheme as a distribution of the value-signal pair. A signaling scheme is a probability measure $\pi_i$ on $I_v \times I_m$. Note that $I_v = I_m = I = [\underline{v}, 1]$. $I_v$ represents the agent's value space, and the $I_m$ represents the signal space.\footnote{Here we use the posterior mean to represent a signal (and we combine signals with the same posterior mean).} Each $\pi_i$ must satisfy the following two constraints:
\begin{itemize}
\item[(a)] The marginal distribution on $I_v$ is $\nf_i$. Formally, for every measurable $A \subseteq I_v$,
$$\pi_i(A \times I_m) = \nf_i(A).$$ 
\item[(b)] The conditional expectation satisfies $\E[v_i \mid m_i] = m_i$ almost surely. Formally, for any continuous function $h$ on $I_m$,  
$$\int_{I_v \times I_m} (v - m) \cdot h(m) \d \pi_i(v, m) = 0.$$
\end{itemize}

We define the strategy space for agent $i$ as:
\[S_i = \left\{\pi_i \in \Delta(I_v \times I_m) \mid \textrm{conditions (a) and (b) hold}\right\}.\] 
One can verify that the above form is equivalent to the probabilistic mapping form that we used to define signaling schemes in \cref{sec:prelim_signals}.

\subsubsection{Noisy Selection and Utilities}
After agents choose their signaling schemes, the pairs $(v_i, m_i)$ are drawn independently from $\pi_i$. The principal observes $y_i = m_i \cdot \xi_i$ as the value of agent $i$, where $\xi_i$'s are sampled from the same uniform distribution $\Xi = \mathtt{Uniform}[1 - \eta, 1 + \eta]$ independently. \footnote{Our proof directly extends to heterogenuous atomless distributions, that is, each $\xi_i$ is drawn from $\Xi_i \in \Delta([1 - \eta, 1 + \eta])$ where $\Xi(\{t\}) = 0$ for any $t \in [1 - \eta, 1 + \eta]$, but for simplicity we use uniform distribution here.}

The principal selects the $k$ agents with the highest observed value $y_i$ and breaks ties uniformly. If an agent $i$ is selected when her value is $v_i$, she will receive utility $u_i(v)$, where $u_i$ is continuous and weakly increasing; otherwise, she receives $0$. Let $\bv = (v_1, \ldots, v_N)$, $\bm = (m_1, \ldots, m_N)$ and $\bxi = (\xi_1, \ldots, \xi_N)$ denote the value, message and noise profile of all agents. Furthermore, denote by $\rho_i(\bem, \bxi)$ the probability of $i$ being selected when the revealed messages and noises are $\bem$ and $\bxi$, we have 

$$
\rho_i(\bem, \bxi) = \begin{cases}
    1 & \text{ if $|\{j \in [N] \mid \xi_j \cdot m_j \ge \xi_i \cdot m_i\}| \le k$}; \\
    \frac{\max\{0,\ k - |\{j \in [N] \mid \xi_j \cdot m_j > \xi_i \cdot m_i\}|}{|\{j \in [N] \mid \xi_j \cdot m_j = \xi_i \cdot m_i\}|}& \text{ else}.
\end{cases} 
$$

The utility of an agent after revelation is thus $g_i(\bv, \bem, \bxi) = \rho_i(\bem, \bxi) \cdot u_i(v_i)$. The expected utility pf agent $i$ on the agents' strategy profile $\bmpi = (\pi_1, \ldots, \pi_N)$ is given by
$$
U_i(\bmpi) = \E_{\{(v_i, m_i) \sim \pi_i,\ \xi_i \sim \Xi\}_{i \in [N]}}[g_i(\bv, \bem, \bxi)].
$$

We call the game defined above the \emph{$\eta$-noisy Bayesian persuasion game}. 

\subsection{Existence of Nash Equilibrium}
In the original game, the payoff function is discontinuous, so an equilibrium is not guaranteed. In this section, we will show  that adding continuously-distributed noises to the observed values eliminates the discontinuity of the payoff function, and hence guarantees a Nash equilibrium. The challenge here is that since our utility functions are general, the non-linearity of the utility functions makes the agents' payoffs not only depend on their posterior distributions. So we cannot assume that each agent's strategy space is without loss of generality the mean preserving contraction of their priors as in the work of \citet*{Du2024}. 

\begin{theorem} \label{thm:noisy_nash}
    The Nash equilibrium always exists for the $\eta$-noisy Bayesian persuasion game.
\end{theorem}

In order to prove the Nash equilibrium existence, we need to utilize the Glicksberg's Theorem \citep*{glicksberg1952further}. We first prove that the strategy space is a non-empty, compact and convex metric space.

\begin{lemma} \label{lem:noisy_compact}
    For each agent $i$, the strategy space $S_i$ is non-empty, compact and convex.
\end{lemma}

\begin{proof}
    Since the uninformative signaling scheme where $\pi_i = \nf_i \times \delta_{\E_{v_i \sim \nf_i}[v_i]}$\footnote{$\delta_x$ represents the Dirac delta measure on a single point $x$.} satisfies both conditions, the strategy space is non-empty. 
    
    We endow the $\Delta(I_v \times I_m)$ with the weak topology. Let $\big\{\pi_i^{(t)}\big\}$ weakly converge to $\pi_i$. Since the marginal distribution on $I_v$ is $\nf_i$ (condition (a)), for any continuous $f$ on $I_v$ we have \[\int_{I_v \times I_m} f(v) \d \pi_i^{(t)}(v, m) = \int_{I_v} f(v) \d \nf_i(v).\] By the property of weakly convergence, we have the left hand side converges to $\int_{I_v \times I_m} f(v) \d \pi_i(v, m)$, hence $\pi_i \in S_i$. Therefore, the space $S_i$ is closed. Since $\Delta(I_v \times I_m) = \Delta(I \times I)$ is compact and a closed subset of a compact set is compact, we have $S_i$ is compact.

    Finally, we prove that $S_i$ is convex. Let $\pi_i^{(1)},\pi_i^{(2)}\in S_i$ and fix $\lambda\in[0,1]$. Define $\pi:=\lambda \pi_i^{(1)}+(1-\lambda)\pi_i^{(2)}$.
    For any continuous $f$ on $I_v$, we have
    \begin{align*}
    \int_{I_v\times I_m} f(v) \d \pi(v, m) &= \lambda \cdot \int_{I_v\times I_m} f(v) \d \pi_i^{(1)}(v, m) + (1-\lambda) \cdot \int_{I_v\times I_m} f(v) \d \pi_i^{(2)}(v, m) \\
    &= \lambda \cdot \int_{I_v} f(v) \d \nf_i(v) + (1-\lambda) \cdot \int_{I_v} f(v) \d \nf_i(v) \\
    &= \int_{I_v} f(v) \d \nf_i(v),
    \end{align*}
    so the $I_v$-marginal of $\pi$ is $\nf_i$, $\pi$ satisfies the constraint (a) of $S_i$. 

    Similarly, for any continuous $h$ on $I_m$, we have $\int_{I_v \times I_m} (v - m) \cdot h(m) \d \pi(v, m) = \lambda \cdot \int_{I_v \times I_m} (v - m) \cdot h(m) \d \pi_i^{(1)}(v, m) + (1 - \lambda) \cdot \int_{I_v \times I_m} (v - m) \cdot h(m) \d \pi_i^{(2)}(v, m) = 0 + 0 = 0$. Hence $\pi$ satisfies the posterior mean constraint (b) of $S_i$. Therefore, we have proved that $\pi\in S_i$, and $S_i$ is thus convex.  
\end{proof}

\begin{lemma} \label{lem:noisy_continuous}
    For each agent $i$, the utility $U_i:S_1 \times \cdots \times S_N \rightarrow \R$ is continuous.
\end{lemma}
\begin{proof}
    Let $(\pi_1^{(t)}, \pi_2^{(t)}, \ldots, \pi_N^{(t)})$ converges to $(\pi_1, \pi_2, \ldots, \pi_N)$ weakly. Define the following outcome product measures:
    \begin{gather*}
        \mu^{(t)} = \pi_1^{(t)} \times \pi_2^{(t)} \times \cdots \times \pi_N^{(t)} \times \Xi^{N}; \\
        \mu = \pi_1 \times \pi_2 \times \cdots \times \pi_N \times \Xi^{N}.
    \end{gather*}

    To prove the continuity of $U_i$, it suffices to show that 
    \[
    U_i(\bmpi^{(t)}) = \int g_i(\bv, \bem, \bxi) \, \d \mu^{(t)} \;\longrightarrow\; \int g_i(\bv, \bem, \bxi) \, \d \mu = U_i(\bmpi).
    \]
    
    Recall that $g_i(\bv, \bem, \bxi) = \rho_i(\bem, \bxi) \, u_i(v_i)$. Since $u_i$ is continuous on the compact interval $I$, it must be bounded. Moreover, $\rho_i(\bem, \bxi) \in [0,1]$, so $g_i$ is bounded.
    
    The only potential discontinuities of $g_i$ arise from $\rho_i(\bem, \bxi)$, which depends on comparisons of the form $\xi_j \cdot m_j > \xi_\ell \cdot m_\ell$. We denote the set of ``tie points'' by
    \[
    T = \bigcup_{\substack{(j, \ell) \in [N]^2,\\ j \neq \ell}} \{ (\bem, \bxi) \mid \xi_j \cdot m_j = \xi_\ell \cdot m_\ell \}.
    \]
    For any $\bmpi$ and a pair of distinct agents $j, \ell$. Consider the random quadruple $(m_j, \xi_j, m_{\ell}, \xi_{\ell})$ drawn from the distribution $\pi_j \times \Xi \times \pi_{\ell} \times \Xi$. Condition on any fixed realization of $(m_j, m_\ell, \xi_\ell)$, the value of $\xi_j$ that would cause a tie $\xi_j \cdot m_j = \xi_\ell \cdot m_\ell$ is fixed and single. Since $\xi_j$ is drawn independently from the continuous uniform distribution $\Xi$ and all $m_j$ is at least $\underline{v} > 0$, we have both $\mu(T) = 0$ and $\mu^{(t)}(T) = 0$ for all $t$.
    
    On the other hand, for any point $(\bem, \bxi) \notin T$, the ranking of $(\xi_1 m_1, \ldots, \xi_N m_N)$ does not change locally within a small neighborhood of the point, so $\rho_i(\bem, \bxi)$ is locally constant and hence continuous. Therefore, $g_i$ is continuous except on a $\mu$- and $\mu^{(t)}$-null set $T$.
    
    Since $g_i$'s are uniformly bounded and continuous almost everywhere, and $\mu^{(t)} \Rightarrow \mu$ (weak convergence on the compact domain), the Portmanteau theorem \citep*[Theorem 2.7]{billingsley}\footnote{The exact theorem is using a pushforward notation $\mu^{(t)} g_i^{-1} \longrightarrow \mu g_i^{-1}$, which translates to the equation below.} implies
    \[
    \int g_i(\bv, \bem, \bxi) \, \d \mu^{(t)} \;\longrightarrow\; \int g_i(\bv, \bem, \bxi) \, \d \mu.
    \]
    Hence $U_i(\bmpi^{(t)}) \to U_i(\bmpi)$, proving that $U_i$ is continuous on $S_1 \times \cdots \times S_N$.
\end{proof}

So far, we have prove that the payoff function is continuous and the strategy space is a compact metric space, we utilize the following Glicksberg's Theorem to finish the proof of the existence of Nash equilibrium.

\begin{lemma}[Glicksberg's Theorem \citep*{glicksberg1952further}]\label{thm:Glicksberg}
A game with a finite number of players has a Nash equilibrium in mixed strategies if:
\begin{enumerate}
\item Each player's strategy space is a nonempty, compact, convex metric space.
\item Each player's utility function is continuous.
\end{enumerate}
\end{lemma}

By applying Glicksberg's Theorem with \cref{lem:noisy_compact} and \cref{lem:noisy_continuous}, we conclude that the Nash equilibrium exists for this game. 

\subsection{PoA Upper Bound}
Given that the Nash equilibrium always exists, we are going to prove the following PoA upper bound for the noisy game defined above.

\begin{theorem}\label{thm:noisy_main}
    The PoA of the $\eta$-noisy Bayesian persuasion game with $k$ selection (for any $k \ge 1$) is at most $(11 + 5 \sqrt{5}) \cdot \left(\frac{1 + \eta}{1 - \eta}\right)^2$.
\end{theorem}

The is almost identical to \cref{sec:general}. The construction of quantile cuts and the deviation signal $s^*$, the definition of winning probability and contribution all remain the same, and \cref{lem:swub} to \cref{lem:deviation_wp_lb} hold still. We only need to modify \cref{lem:klargest_win_lb} and \cref{lem:PoA_M_ub}. Denote the strategy profile of agents in the Nash Equilibrium by $\bpis$. We keep the rest of the notations the same as \cref{sec:general}, but now everything is based on the Nash equilibrium profile $\bpis$ of the noisy game.

\cref{lem:klargest_win_lb} need to be modified into the following lemma since the observation is noisy:
\begin{lemma} [Modified version of \cref{lem:klargest_win_lb}]\label{lem:klargest_win_lb_noisy}
If agent $i^*$ is selected with probability less than $\tau$ when $s^*$ is sent in $\bpis'$, we have \[\Pr\left[M(\bpis) \ge e'_{i^*}(s^*) \cdot \frac{1 - \eta}{1 + \eta} \right] \ge 1 - \tau.\] 
\end{lemma}
\begin{proof}
    If not, we have $\Pr\left[M(\bpis) \cdot (1 + \eta) < e'_{i^*}(s^*) \cdot (1 - \eta) \right] \ge \tau$. Denote by $M_{-i^*}(\bpis)$ the random variable representing the $(k - |\nn_\ell|)$-th largest posterior mean among all agents except $i^*$ in $\nn_s$ in $\bpis$. $M_{-i^*}(\bpis)$ is stochastically dominated by $M(\bpis)$. When $M_{-i^*}(\bpis) \cdot (1 + \eta) < e'_{i^*}(s^*) \cdot (1 - \eta)$, even with noisy observation, agent $i^*$ must be selected when $s^*$ is sent. This contradicts the assumption that she is selected with probability less than $\tau$ when $s^*$ is selected. 
\end{proof}

Similarly, \cref{lem:PoA_M_ub} is modified into the following:
\begin{lemma} \label{lem:PoA_M_ub_noisy}
     Given a strategy profile $\bpis$, if 
     $\Pr\left[M(\bpis) \ge e'_{i^*}(s^*) \cdot \frac{1 - \eta}{1 + \eta} \right] \ge 1 - \tau$,
     then we have 
     \[
     \SW(\bpis) \ge \SW' \cdot \frac{(1 - \tau) \cdot \left(\frac{1}{\phi} - \frac{\beta}{\tau \cdot \phi} \right) \cdot \left(\frac{1 - \eta}{1 + \eta}\right)^2}{2}.
     \]
\end{lemma}
\begin{proof}
    Similar to the proof of \cref{lem:PoA_M_ub}, we consider a hypothetical scenario where, instead of selecting the $k$ agents with the highest posterior means, the \principal selects all agents in $\nn_\ell$ and selects the remaining $k - |\nn_\ell|$ agents from $\nn_s$ with the highest posterior means \textbf{without noise}.  Denote the resulting social welfare in this scenario by $\SW^{\textrm{(fake)}}_{\textrm{(clear)}}(\bpis)$. We have
    \begin{align*}
        \SW^{\textrm{(fake)}}_{\textrm{(clear)}}(\bpis) &\ge  (k - |\nn_\ell| ) \cdot \E[M(\bpis)] + \sum_{i \in \nn_\ell} E_{i} \notag\\
        & \geq (1-\tau) \cdot (k - |\nn_\ell| ) \cdot  e'_{i^*}(s^*) \cdot \frac{1 - \eta}{1 + \eta} +  \sum_{i \in \nn_\ell} E_{i}  \tag{w.p. $1-\tau$, $M(\bpis) \ge e'_{i^*}(s^*) \cdot \frac{1 - \eta}{1 + \eta}$. }\\
        & \ge \Bigg[ (1 - \tau) \cdot (k - |\nn_\ell|) \cdot \left(\frac{1}{\phi} - \frac{\beta}{\tau \cdot \phi} \right) \cdot E_{i^*} + \sum_{i \in \nn_\ell} E_{i} \Bigg] \cdot \frac{1 - \eta}{1 + \eta} \tag{\cref{lem:ei}}\\
        & \ge \frac{(1 - \tau) \cdot (k - |\nn_\ell|) \cdot \left(\frac{1}{\phi} - \frac{\beta}{\tau \cdot \phi} \right) \cdot E_{i^*} + \sum_{i \in \nn_\ell} E_{i} }{(1 - \tau) \cdot \left(\frac{1}{\phi} - \frac{\beta}{\tau \cdot \phi} \right) \cdot (k - |\nn_\ell|) + |\nn_\ell|}\cdot (1 - \tau) \cdot \left(\frac{1}{\phi} - \frac{\beta}{\tau \cdot \phi} \right) \cdot k \cdot \frac{1 - \eta}{1 + \eta}.
    \end{align*}

    Note that everything except the last multiplicative factor $\frac{1 - \eta}{1 + \eta}$ is the same as the lower bound of $\SW^{\textrm{(fake)}}(\bpis)$ in \cref{eqn:fake_lb_1}. By applying the same argument in \cref{sec:general}, we have
    \begin{align*}
        \SW^{\textrm{(fake)}}_{\textrm{(clear)}}(\bpis) & \ge \frac{k \cdot E_\cut + \sum_{i \in \nn_s} E_i \cdot q_i + \sum_{i \in \nn_\ell} E_i}{k + \sum_{i \in \nn_s} q_i + |\nn_\ell|} \cdot (1 - \tau) \cdot \left(\frac{1}{\phi} - \frac{\beta}{\tau \cdot \phi} \right) \cdot k \cdot \frac{1 - \eta}{1 + \eta}\\
        & = \frac{\SW'}{2 k} \cdot (1 - \tau) \cdot \left(\frac{1}{\phi} - \frac{\beta}{\tau \cdot \phi} \right) \cdot k \cdot \frac{1 - \eta}{1 + \eta}. 
    \end{align*}
    
    Consider another hypothetical scenario where the agents are still playing the equilibrium strategies $\bpis$, but the receiver is observing the signals without noise and selects the $k$ agents with the highest posterior means. Denote the social welfare in this hypothetical scenario by $\SW_{\textrm{(clear)}}(\bpis)$. Since in the fake scenario, the \principal may not be a posterior mean maximizer, we have $\SW^{\textrm{(fake)}}_{\textrm{(clear)}}(\bpis) \le \SW_{\textrm{(clear)}}(\bpis)$. This completes the proof. Since the \principal is observing the values with noises\footnote{More detailedly, this is because the selected noisy values are at least $1/(1 + \eta)$ times the selected real values, and the selected noisy values are at least the noisy values of the actual largest $k$ agents, which is at least $(1 - \eta)$ times the real values of the actual largest $k$ agents.}, the real social welfare with noisy observation $\SW_{\textrm{(noisy)}}(\bpis)$ is at least $\frac{1 - \eta}{1 + \eta}$ fraction of the social welfare without noise. Therefore, we have 
    \begin{align*}
    \SW_{\textrm{(noisy)}}(\bpis) \ge \frac{1 - \eta}{1 + \eta} \cdot \SW_{\textrm{(clear)}}(\bpis) & \ge \frac{1 - \eta}{1 + \eta} \cdot \SW^{\textrm{(fake)}}_{\textrm{(clear)}}(\bpis) \\
    &\ge \frac{\SW'}{2} \cdot (1 - \tau) \cdot \left(\frac{1}{\phi} - \frac{\beta}{\tau \cdot \phi} \right) \cdot \left(\frac{1 - \eta}{1 + \eta}\right)^2. \qedhere
    \end{align*}
\end{proof} 
Therefore, we have the following upper bound of PoA for this noisy game:
\[
\mbox{PoA} \le \max \left\{\frac{2}{\beta \cdot (1 - 1 / \alpha)} , \frac{2}{(1 - \tau) \cdot \left(\frac{1}{\phi} - \frac{\beta}{\tau \cdot \phi} \right)} \cdot \left(\frac{1 + \eta}{1 - \eta}\right)^2\right\}.
\]

By setting the parameters exactly same as \cref{sec:general}, we get PoA of the noisy game is at most $(11+5\sqrt{5})\cdot \left(\frac{1 + \eta}{1 - \eta}\right)^2$, completing the proof of \cref{thm:noisy_main}.

\end{document}